\newtheorem{thm}{Theorem}
\newtheorem{defn}{Definition}
\newtheorem{prop}[thm]{Proposition}
\newtheorem{lemma}[thm]{Lemma}
\newtheorem{example}[thm]{Example}
\title{EF+EX Forest Algebras}
\author{Andreas Krebs, University of T\"ubingen\\
Howard Straubing, Boston College}
\begin{document}
\maketitle

\begin{abstract}
We examine languages of unranked forests definable using the temporal operators EF and EX. We characterize the languages definable in this logic, and various fragments thereof, using the syntactic forest algebras introduced by Bojanczyk and Walukiewicz. Our algebraic characterizations yield efficient algorithms for deciding when a given language of forests is definable in this logic.  The proofs are based on understanding the wreath product closures of  a few small algebras, for which we introduce a general ideal theory for forest algebras.
This combines ideas from the work of Bojanczyk and Walukiewicz for the analogous logics on binary trees and from early work of Stiffler on wreath product of finite semigroups.
\end{abstract}

\section{Overview}
Understanding the expressive power of temporal and first-order logic on trees is important in several areas of computer science, for example in formal verification. Using algebraic methods, in particular, finite monoids, to understand the power of subclasses of the regular languages of finite words has proven to be extremely successful, especially in the characterization of regular languages definable in various fragments of first-order and temporal logics~(\cite{CPP,TW,STR}). Here we are interested in sets of of finite trees (or, more precisely, sets of finite forests), where the analogous algebraic structures are forest algebras.

Bojanczyk {\it et. al.}~\cite{BW2,BSW} introduced forest algebras, and underscored the importance of the wreath product decomposition theory of these algebras in the study of the expressive power of temporal and first-order logics on finite unranked trees. For languages inside of $\mathsf{CTL}$ the associated forest algebras can be built completely via the wreath product of copies of the forest algebra $${\cal U}_2 = (\{0,\infty\},\{1,0,c_0\}),$$ where the vertical element 0 is the constant map to $\infty,$ and the vertical element $c_0$ is the constant map to 0 (\cite{BSW}).  The problem of effectively characterizing the wreath product closure of ${\cal U}_2$ is thus an important open problem, equivalent to characterization of  $\mathsf{CTL}.$ Note that if one strips away the additive structure of ${\cal U}_2,$ the wreath product closure is the family of all finite aperiodic semigroups (the Krohn-Rhodes Theorem). Forest algebras have been successfully applied to the obtain characterization of other logics on trees; see,  for example~\cite{BSS,BS}.

Here we study in detail the wreath product closures of proper subalgebras of ${\cal U}_2$.  In one sense, this generalizes early work of Stiffler~\cite{STI}, who carried out an analogous program for wreath products of semigroups. Along the way, we develop the outlines of a general ideal theory for forest algebras, which we believe will be useful in subsequent work.
After developing the algebraic theory, we give an application to logic.  We obtain a characterization of the languages of unranked forests definable using  the temporal operators $\mathsf{EF}$ and $\mathsf{EX}.$  This extends the work of Bojanczyk and Walukiewicz~\cite{BW1}, who obtain a similar characterization for the analogous logics on binary trees.  Our proof, however, which proceeds entirely from the algebraic analysis, is completely different.
Similar results, again for the case of trees of bounded rank, appear in \'Esik \cite{ESIK}.

The paper is structured in the following way. First we introduce forest algebras and introduce some general theory (see Section \ref{sec.forestalgebras}). After giving the connections between sublogics of $\mathsf{CTL}$ in Section~\ref{section.logic},  we examine in detail algebraic operations corresponding to closure under the $\mathsf{EF}$ quantifier (Section~\ref{sec.ef}), the $\mathsf{EX }$ quantifier (Section~\ref{sec.ex}) and then both quantifiers (Section~\ref{sec.efex}). We conclude with our characterization and decidability results %
in Section~\ref{sec.con}).
\section{Forest Algebras}\label{sec.forestalgebras}
\subsection{Preliminaries}
We refer the reader to ~\cite{BW2,BSW} for the definitions of abstract forest algebra, free forest algebra, and syntactic forest algebra. We denote the free forest algebra over a finite alphabet $A$ by $A^{\Delta}=(H_A,V_A),$ where $H_A$ denotes the monoid of forests over $A,$ with concatenation as the operation, and $V_A$ denotes the monoid of contexts over $A,$ with composition as the operation. A subset $L$ of $H_A$ is called a forest language over $A.$ We denote its syntactic forest algebra by $(H_L,V_L),$ and its syntactic morphism by 
$$\mu_L:A^{\Delta}\to (H_L,V_L).$$

For the most part, our principal objects of study are not the forest algebras themselves, but homomorphisms
$$\alpha:A^{\Delta}\to (H,V).$$
It is important to bear in mind that each such homomorphism is actually a pair of monoid homomorphisms, one mapping $H_A$ to $H$ and the other mapping $V_A$ to $V.$ It should usually be clear from the context which of the two component homomorphisms we mean, and thus we denote them both by $\alpha.$  The `freeness' of $A^{\Delta}$ is the fact that a homomorphism $\alpha$ into $(H,V)$ is completely determined by giving its value, in $V,$ at each $a\in A.$ 

A homomorphism $\alpha$ as above {\it recognizes} a language $L\subseteq H_A$ if there exists $X\subseteq H$ such that $\alpha^{-1}(X)=L.$

If $\alpha:A^{\Delta}\to (H,V)$ and  $\beta:A^{\Delta}\to (H',V'),$ are homomorphisms, we say that $\beta$ {\it factors through} $\alpha$ if for all $s,s'\in H_A,$ $\alpha(s)=\alpha(s')$ implies $\beta(s)=\beta(s').$  This is equivalent to the existence of a homomorphism $\rho$ from the image of $\alpha$ into $(H',V')$ such that $\beta=\rho\alpha.$  A homomorphism $\alpha$ recognizes $L\subseteq H_A$ if and only if $\mu_L$ factors through $\alpha.$(~\cite{BW2}).

In the course of the paper we will see several {\it congruences} defined on free forest algebras.  Such a congruence is determined by an equivalence relation $\sim$ on $H_A$ such that for any $p\in V_A,$ $s\sim s'$ implies $ps\sim ps'.$ This gives a well-defined action of $V_A$ on the set of $\sim$-classes of $H_A.$  We define an equivalence relation (also denoted $\sim$) on $V_A$ by setting $p\sim p'$ if for all $s\in H_A,$ $ps\sim p's.$  The result is a quotient forest algebra $(H_A/{\sim},V_A/{\sim}).$  In order to prove that an equivalence relation $\sim$ on $H_A$ is a congruence, it is sufficient to verify that $s\sim s'$ implies $s+t\sim s+t'$ and $as \sim as'$ for all $s,s',t\in H_A$ and $a\in A.$

\subsection{Horizontally idempotent and commutative algebras}
We now introduce an important restriction.  Throughout the rest of the paper, we will assume that  all of our finite forest algebras $(H,V)$ have $H$ idempotent and commutative; that is $h+h=h'+h$ and $h+h=h$ for all $h,h'\in H.$  This is a natural restriction when talking about classes of forest algebras arising in temporal logics, which is the principal application motivating this study.

When $H$ is horizontally idempotent and commutative, the sum of all its elements is an absorbing element for the monoid.  While an absorbing element in a monoid is ordinarily written 0, since we use additive notation for $H,$ its identity is denoted 0, and accordingly we denote the absorbing element, which is necessarily unique, by $\infty.$

We say that two forests $s_1,s_2\in H_A$ are {\it idempotent-and-commutative equivalent} if $s$ can be transformed into $t$ by a sequence of operations of the following three types: {\it (i)} interchange the order of two adjacent subtrees (that is, if $s=p(t_1+t_2)$ for some context $p$ and trees $t_1,t_2,$ then we transform $s$ to $p(t_2+t_1)$); {\it (ii)} replace a subtree $t$ by two adjacent copies (that is, transform $pt$ to $p(t+t)$); {\it (iii)} replace two identical adjacent subtrees by a single copy (transform $p(t+t)$ to $pt$).  Since operations {\it (ii)} and {\it (iii)} are inverses of one another, and operation {\it (i)} is its own inverse, this is indeed an equivalence relation.

We have the following obvious lemma:
\medskip\begin{lemma}\label{lemma.ic_equivalence}
Let $\alpha:A^{\Delta}\to (H,V)$ be a homomorphism, where $H$ is horizontally idempotent and commutative.  If $s,t\in H_A$ are idempotent-and-commutative equivalent, then $\alpha(s)=\alpha(t).$
\end{lemma}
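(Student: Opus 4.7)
The plan is to reduce the claim to a single-step statement and then verify it for each of the three elementary operations individually, using only the facts that $H$ is commutative and idempotent and that $\alpha$ is a forest-algebra homomorphism.

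First I would observe that since idempotent-and-commutative equivalence is, by definition, the reflexive-transitive-symmetric closure of the three elementary operations (i), (ii), (iii), it suffices by a trivial induction on the number of steps to show that whenever $s$ is obtained from $s'$ by a single application of one of (i), (ii), (iii), we have $\alpha(s)=\alpha(s')$. Since $\alpha$ is a morphism of forest algebras, for any context $p\in V_A$ and forests $u,v\in H_A$ we have $\alpha(p u)=\alpha(p)\alpha(u)$ and $\alpha(u+v)=\alpha(u)+\alpha(v)$, so it is enough to verify the equalities $\alpha(t_1+t_2)=\alpha(t_2+t_1)$ and $\alpha(t)=\alpha(t+t)$ in $H$.

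For operation (i), commutativity of $+$ in $H$ gives $\alpha(t_1)+\alpha(t_2)=\alpha(t_2)+\alpha(t_1)$, and applying $\alpha(p)$ (via the action of $V$ on $H$) to both sides yields $\alpha(p(t_1+t_2))=\alpha(p(t_2+t_1))$. For operations (ii) and (iii), idempotence of $+$ gives $\alpha(t)=\alpha(t)+\alpha(t)=\alpha(t+t)$, and again applying $\alpha(p)$ yields $\alpha(pt)=\alpha(p(t+t))$. Combining these three cases and invoking the one-step-induction argument gives the result.

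There is essentially no obstacle here: the lemma is a direct translation of the two defining equations of horizontal idempotence and commutativity into a statement about the syntactic action, and the only thing one has to be careful about is keeping straight that $\alpha$ denotes simultaneously the forest component and the context component of the homomorphism, so that the equality $\alpha(pu)=\alpha(p)\alpha(u)$ is available without further justification.
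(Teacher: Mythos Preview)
Your proof is correct and follows essentially the same approach as the paper's proof, which simply observes that each of the three elementary operations preserves the value under $\alpha$ by idempotence and commutativity of $H$. You have merely spelled out in more detail the reduction to a single step and the use of the homomorphism properties $\alpha(pu)=\alpha(p)\alpha(u)$ and $\alpha(u+v)=\alpha(u)+\alpha(v)$.
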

\begin{proof} By idempotence and commutativity of $H,$ each of the three operation types used to transform $s$ into $t$ preserves the value under $\alpha.$
\end{proof}

There is a smallest nontrivial idempotent and commutative forest algebra,  ${\cal U}_1=(\{0,\infty\},\{1,0\}).$ The horizontal and vertical monoids of  ${\cal U}_1$ are isomorphic, but we use different names for the elements because of the additive notation for the operation in one of these monoids, and multiplicative notation in the other.  We have not completely specified how 
the vertical monoid acts on the horizontal monoid---this is done by setting $0\cdot x=\infty$ for $x\in\{0,\infty\}.$

\subsection{1-definiteness}

In Section~\ref{sec.ex} we will discuss in detail the notion of definiteness in forest algebras; for this preliminary section, we will only need to consider a special case.  A forest algebra homomorphism $\alpha:A^{\Delta}\to (H,V)$ is said to be {\it 1-definite} if for $s\in H_A,$ the value of $\alpha(s)$ depends only on the set of labels of the root nodes of $s.$  We define an equivalence relation $\sim_1$ on $H_A$ by setting $s\sim_1 s'$ if and only if the sets of labels of root nodes of $s$ and $s'$ are equal.  This defines a congruence on $A^{\Delta}.$  We denote the homomorphism from $A^{\Delta}$ onto the quotient under $\sim_1$ by $\alpha_{A,1}.$  It is easy to show that a homomorphism $\alpha:A^{\Delta}\to (H,V)$ is 1-definite if and only if it factors through $\alpha_{A,1}.$

\subsection {Wreath Products}
We summarize the discussion of wreath products given in ~\cite{BSW}.
The {\it wreath product} of two forest algebras $(H_1,V_1), (H_2,V_2)$ is
$$(H_1,V_1)\circ (H_2,V_2)=(H_1\times H_2,V_1\times V_2^{H_1}),$$
where the monoid structure of $H_1\times H_2$ is the ordinary direct product, and the action is given by
$$(v_1,f)(h_1,h_2)=(v_1h_1,f(h_1)h_2),$$
for all $h_1\in H_1,$ $h_2\in H_2,$ $v_1\in V_1,$ and $f:H_1\to V_2.$  It is straightforward to verify that the resulting structure satisfies the axioms for a forest algebra. Note that if one forgets about the monoid structure on $H_1$ and $H_2,$ this is just the ordinary wreath product of left transformation monoids.  Because we use left actions rather than the right actions that are traditional in the study of monoid decompositions, we reverse the usual order of the factors.  The projection maps
$$\pi:(h_1,h_2)\mapsto h_1, (v,f)\mapsto v,$$
define a homomorphism from the wreath product onto the left-hand factor. 

We will view wreath products through the lens of homomorphisms from the free forest algebra.  Given such a homomorphism
$$\gamma:A^{\Delta}\to (H_1,V_1)\circ (H_2,V_2)$$
we can write, for each $a\in A,$
$$\gamma(a)=(v_a,f_a).$$
This gives rise to a pair of homomorphisms
$$\alpha:A^{\Delta}\to (H_1,V_1),\beta:(A\times H_1)^{\Delta}\to (H_2,V_2),$$
where $\alpha(a)=v_a,$ and $\beta(a,h)=f_a(h).$  We write $\gamma=\alpha\otimes\beta.$  Note that $\alpha=\pi\gamma,$ where $\pi$ is the projection onto the left-hand factor.  Conversely, any pair of homomorphisms $\alpha$ and $\beta$ as above gives rise to a homomorphism $\alpha\otimes\beta$ into the wreath product. We then have, for any $s\in H_A,$
$$\alpha\otimes\beta(s)=(\alpha(s), \beta(s^{\alpha})),$$
where $s^{\alpha}\in H_{A\times H_1}$ is obtained from $s$ through a relabeling process:  if a node $x$ of $s$ is originally labeled $a\in A,$ and the tree rooted at $x$ is $at,$ where $t\in H_A,$ then the label of the same node in $s^{\alpha}$ is $(a,\alpha(t)).$

The wreath product is an associative operation on forest algebras.  Given forest algebras $(H_i,V_i),$ $i=1,\ldots,r,$ and homomorphisms
$$\alpha_1:A^{\Delta}\to (H_1,V_1),$$
$$\alpha_i:(A\times H_1\times\cdots H_{i-1})^{\Delta}\to (H_i,V_i),$$
for $i=2,\ldots,r,$
we can form the homomorphism
$$\alpha_1\otimes\cdots\otimes\alpha_r:A^{\Delta}\to (H_1,V_1)\circ\cdots\circ (H_r,V_r).$$

A homomorphism
$$\alpha:A^{\Delta}\to (H_,V_1)\times (H_2,V_2)$$
into a direct product factors through the wreath product in a trivial way:  Let $\alpha_1,\alpha_2$ be the two component homomorphisms, and set $\beta(a,h)=\alpha_2(a)$ for all $a\in A, h\in H_1.$  Then $\alpha$ factors through $\alpha_1\otimes\beta.$

\subsection{Reachability}
Let $(H,V)$ be a finite forest algebra. For $h,h'\in H$ we write $h\leq h'$ if $h=vh'$ for some $v\in V,$  and say that $h$ is {\it reachable} form $h'.$ This gives a preorder on $H.$  We set $h\cong h' $ if both $h\leq h'$ and $h'\leq h.$ An equivalence class of $\cong$ is called a {\it reachability class}. The preorder consequently results in a partial order on the set of reachability classes of $H.$   We always have $h+h'\leq h,$ because $h+h'=(1+h')h.$ If $h\in H$ and $\Gamma$ is a reachability class of $H$ then we write, for example, $h\geq \Gamma$ to mean that $\Gamma\leq \Gamma',$ where $\Gamma'$ is the class of $h.$ 

A {\it reachability ideal} in $(H,V)$ is a subset $I$ of $H$ such that $h\in I$ and $h'\leq h$ implies $h'\in I.$  If we have a homomorphism
$$\alpha:A^{\Delta}\to (H,V)$$
and a reachability ideal $I\subseteq H,$ we define an equivalence relation ${\sim}_I$ on $H_A$ by setting $s\sim_I s'$ if $\alpha(s)=\alpha(s')\notin I,$ or if $\alpha(s),\alpha(s')\in I.$ Easily $s\sim_I s'$ implies $ps\sim_I ps'$ for any $p\in V_A.$  We thus obtain a homomorphism onto the quotient algebra
$$\alpha_I:A^{\Delta}\to (H/{\sim}_I,V/{\sim}_I)$$
which factors through $\alpha.$  Note that $I$ is, in particular, a two-sided ideal in the monoid $H,$ and $H/{\sim_I}$ is identical to the usual quotient monoid $H/I=(H-I)\cup\{\infty\}.$ We will thus use the notation $(H/I,V/I)$ for the quotient algebra, instead of $(H/{\sim}_I,V/{\sim}_I).$  If $\Gamma\subseteq H$ is a reachability class, then both
$$I_\Gamma=\{h\in H: h\not >\Gamma\}
\mbox{ and }
I_{{\mbox{\tiny$\geq$}}\Gamma}=\{h\in H: h\not\geq\Gamma\}$$
are reachability ideals.  We denote the associated quotients and projection homomorphisms by $(H_{\Gamma},V_{\Gamma}),$
$\alpha_{\Gamma},$ $(H_{{\mbox{\tiny$\geq$}}\Gamma},V_{{\mbox{\tiny$\geq$}}\Gamma})$, $\alpha_{{\mbox{\tiny$\geq$}}\Gamma}.$

Given the restriction that $H$ is idempotent and commutative, the absorbing element $\infty$ is reachable from every element.  The reachability class of $\infty$ is accordingly the unique minimal class, which we denote $\Gamma_{\mathsf{min}}.$ 
A reachability class $\Gamma$ is {\it subminimal} if $\Gamma_{\mathsf{min}}<\Gamma,$ but there is no class $\Lambda$ with $\Gamma_{\mathsf{min}}<\Lambda<\Gamma.$  The following lemma will be used several times.

\medskip\begin{lemma}\label{lemma.subminimal}
Let $\alpha:A^{\Delta}\to (H,V),$ and
let $\Gamma_1,\ldots,\Gamma_r$ be the subminimal reachability classes of $(H,V).$  Then 
$$\alpha_{\Gamma_{\mathsf{min}}}:A^{\Delta}\to (H_{\Gamma_{\mathsf{min}}},V_{\Gamma_{\mathsf{min}}})$$
factors through the direct product
$$\biggl(\prod_{j=1}^r\alpha_{{\mbox{\tiny$\geq$}}\Gamma_j}\biggr):A^{\Delta}\to \prod_{j=1}^r(H_{{\mbox{\tiny$\geq$}}\Gamma_j},V_{{\mbox{\tiny$\geq$}}\Gamma_j}).$$
Further each of the algebras $(H_{{\mbox{\tiny$\geq$}}\Gamma_j},V_{{\mbox{\tiny$\geq$}}\Gamma_j})$ has a unique subminimal reachability class.
\end{lemma}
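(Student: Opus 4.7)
The plan is to verify the two assertions separately, working directly from the definition of $\sim_I$: recall that $s\sim_I s'$ iff either $\alpha(s),\alpha(s')\in I$ or $\alpha(s)=\alpha(s')\notin I$. Throughout I write $I_{\mathsf{min}}=I_{\Gamma_{\mathsf{min}}}$ (which coincides with $\Gamma_{\mathsf{min}}$ itself) and $I_j=I_{{\mbox{\tiny$\geq$}}\Gamma_j}$.

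For the factoring claim, suppose $\alpha_{{\mbox{\tiny$\geq$}}\Gamma_j}(s)=\alpha_{{\mbox{\tiny$\geq$}}\Gamma_j}(s')$ for every $j$; I must show $\alpha(s)\sim_{I_{\mathsf{min}}}\alpha(s')$. Split on whether $\alpha(s)\in\Gamma_{\mathsf{min}}$. If not, then the class of $\alpha(s)$ is strictly above $\Gamma_{\mathsf{min}}$, and since $H$ is finite the class order is well-founded, so there is a subminimal class $\Gamma_j$ with $\Gamma_j\leq$ class of $\alpha(s)$; hence $\alpha(s)\notin I_j$. The hypothesis $s\sim_{I_j}s'$ then forces $\alpha(s)=\alpha(s')$ as elements of $H$, and a fortiori $s\sim_{I_{\mathsf{min}}}s'$. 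If instead $\alpha(s)\in\Gamma_{\mathsf{min}}$, then $\alpha(s)\in I_j$ for every $j$, so for each $j$ we either have $\alpha(s)=\alpha(s')$ (and we are done) or $\alpha(s')\in I_j$ as well. If the latter holds for every $j$, then $\alpha(s')$ is not $\geq$ any subminimal class, so by the same well-foundedness argument $\alpha(s')\in\Gamma_{\mathsf{min}}$, and both elements collapse to $\infty$ in $H_{\Gamma_{\mathsf{min}}}$.

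For the uniqueness claim, I would observe that since any two elements of $H\setminus I_j$ have distinct images in the quotient and the vertical action is preserved, the reachability classes of $(H_{{\mbox{\tiny$\geq$}}\Gamma_j},V_{{\mbox{\tiny$\geq$}}\Gamma_j})$ other than $\{\infty\}$ are in order-preserving bijection with the classes $\Lambda$ of $H$ satisfying $\Lambda\geq\Gamma_j$. Among these, $\Gamma_j$ is minimal, and any strictly larger class $\Lambda>\Gamma_j$ has $\Gamma_j$ strictly between $\Lambda$ and $\{\infty\}$, so $\Gamma_j$ is the unique subminimal class of the quotient.

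The only subtle point is remembering that the equivalence $\sim_I$ is equality outside $I$, so that in Case 1 above one really does recover $\alpha(s)=\alpha(s')$ in $H$ rather than merely agreement in the quotient; once this is kept straight, the argument is pure bookkeeping with the definitions and a single appeal to finiteness of the class order.
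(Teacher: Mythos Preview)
Your proof is correct and follows essentially the same approach as the paper's: a case split on whether $\alpha(s)\in\Gamma_{\mathsf{min}}$, using in the first case that any class strictly above $\Gamma_{\mathsf{min}}$ dominates some subminimal $\Gamma_j$ (so equality outside $I_j$ gives $\alpha(s)=\alpha(s')$), and in the second that failing to dominate every subminimal class forces $\alpha(s')\in\Gamma_{\mathsf{min}}$. The uniqueness claim is handled identically, by identifying the non-$\infty$ reachability classes of the quotient with the classes $\Lambda\geq\Gamma_j$ of $H$.
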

\begin{proof}Choose $s,s'\in H_A$ such that $\alpha_{{\mbox{\tiny$\geq$}}\Gamma_j}(s)=\alpha_{{\mbox{\tiny$\geq$}}\Gamma_j}(s')$ for all $j=1,\ldots,r.$ If $\alpha(s)>\Gamma_{\mathsf{min}},$ then $\alpha(s)\geq\Gamma_j$ for some $j,$ and thus $\alpha(s)=\alpha(s'),$ so in particular $\alpha_{\Gamma_{\mathsf{min}}}(s)=\alpha_{\Gamma_{\mathsf{min}}}(s').$  If $\alpha(s)\not >\Gamma_{\mathsf{min}},$ then $\alpha(s)\in\Gamma_{\mathsf{min}},$  by minimality. Thus every $\alpha_{{\mbox{\tiny$\geq$}}\Gamma_j}(s)$ is the absorbing element $\infty$ of the quotient algebra, so the same is true for $\alpha_{{\mbox{\tiny$\geq$}}\Gamma_j}(s').$  Thus $\alpha(s')\not\geq \Gamma_j$ for all $j,$ so $\alpha(s')\in\Gamma_{\mathsf{min}},$ and $\alpha_{\Gamma_{\mathsf{min}}}(s)=\alpha_{\Gamma_{\mathsf{min}}}(s').$  This proves the claim about factorization of the homomorphisms.  

Next, for the claim about the subminimal classes of $(H_{{\mbox{\tiny$\geq$}}\Gamma_j},V_{{\mbox{\tiny$\geq$}}\Gamma_j}),$  observe that the reachability classes of this algebra are just the reachability classes of $(H,V)$ that are greater than or equal to $\Gamma_j,$ along with the minimal class $\{\infty\}.$ 
\end{proof}

We will also need the following lemma, which concerns the behavior of reachability classes under homomorphisms.
\medskip\begin{lemma}\label{lemma.morphisms}
Let $\beta:(H_1,V_1)\to (H_2,V_2)$ be a homomorphism of finite forest algebras.  Let $\Lambda\subseteq H_1$ be a reachability class.
There is a reachability class $\Gamma$ of $(H_2,V_2)$ such that $\beta(\Lambda)\subseteq\Gamma.$
If $\Lambda$ is a minimal class of $(H_1, V_1)$ satisfying  $\beta(\Lambda)\subseteq\Gamma,$ and $\beta$ is onto, then $\beta(\Lambda)=\Gamma.$
If, further, $H_2$ is idempotent and commutative, then there is only one such minimal class $\Lambda.$
\end{lemma}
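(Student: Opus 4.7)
My plan is to tackle the three assertions in order, with each resting on the previous.

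First, for the existence of $\Gamma$ in the opening claim, I would observe that any forest algebra homomorphism is monotone with respect to reachability: if $h = v h'$ then $\beta(h) = \beta(v)\beta(h')$, so $\beta(h) \leq \beta(h')$ in $(H_2,V_2)$. Applying this symmetrically to any two $h,h' \in \Lambda$ gives $\beta(h) \cong \beta(h')$, forcing the image $\beta(\Lambda)$ into a single reachability class of $(H_2,V_2)$.

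For the second claim, the plan is to show that every $g \in \Gamma$ has a preimage already inside $\Lambda$. Fix any $h_0 \in \Lambda$ and any $g \in \Gamma$; since $\beta(h_0)$ and $g$ both lie in $\Gamma$, there is some $u \in V_2$ with $u\beta(h_0) = g$. Using that $\beta$ is surjective on vertical elements, lift $u$ to some $v \in V_1$, and set $h = vh_0$. Then $\beta(h) = g$, and the class $\Lambda'$ of $h$ satisfies $\Lambda' \leq \Lambda$. By the first claim $\beta(\Lambda')$ lies in a single class, and since $\beta(h) = g \in \Gamma$ that class must be $\Gamma$. Minimality of $\Lambda$ now collapses $\Lambda' = \Lambda$, producing the required preimage and showing $\beta(\Lambda) = \Gamma$.

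For the third claim, assume $\Lambda_1,\Lambda_2$ are both minimal classes whose $\beta$-image is contained in $\Gamma$. The key move is to apply the second claim to find $h_1 \in \Lambda_1$ and $h_2 \in \Lambda_2$ with a common image $g \in \Gamma$. Consider $h_1 + h_2$: the general fact $h + h' \leq h$ tells us that the class $\Lambda_3$ of $h_1+h_2$ satisfies $\Lambda_3 \leq \Lambda_1$ and $\Lambda_3 \leq \Lambda_2$. Horizontal idempotence of $H_2$ then gives $\beta(h_1+h_2) = g + g = g \in \Gamma$, so by the first claim $\beta(\Lambda_3) \subseteq \Gamma$. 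Minimality of $\Lambda_1$ and $\Lambda_2$ simultaneously forces $\Lambda_3 = \Lambda_1 = \Lambda_2$.

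The main subtlety, I expect, is allocating the hypotheses correctly: surjectivity on vertical elements powers the second claim by letting us lift the connecting context $u$, while horizontal idempotence of $H_2$ is precisely what guarantees $g + g = g$ and so keeps $\Lambda_3$ inside $\beta^{-1}(\Gamma)$ in the final step. Without the idempotence of $H_2$, the class of $h_1 + h_2$ could drop strictly below $\Gamma$ in $(H_2,V_2)$ and the uniqueness argument would collapse.
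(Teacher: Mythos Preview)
Your proof is correct and follows essentially the same strategy as the paper: monotonicity of $\beta$ with respect to reachability for the first claim, lifting a context via surjectivity and invoking minimality of $\Lambda$ for the second, and using the sum $h_1+h_2$ together with minimality for the third. Your third step is in fact slightly cleaner than the paper's: by invoking the second claim to choose $h_1\in\Lambda_1$ and $h_2\in\Lambda_2$ with a \emph{common} image $g$, the idempotence step $\beta(h_1+h_2)=g+g=g\in\Gamma$ is immediate, whereas the paper simply asserts ``$\beta(h+h')=\beta(h)\in\Gamma$'' without first arranging $\beta(h)=\beta(h')$.
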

\begin{proof}
Let $h_1,h_1'\in\Lambda,$ and let $h_2=\beta(h_1),$ $h_2'=\beta(h_1').$ To prove the first claim, we must show $h_2\cong h_2'.$ There exist $v,v'\in V_1$ such that $h_1=v'h_1',$ $h_1'=vh_1.$  We then have $h_2=\beta(v')h_2'$ and $h_2'=\beta(v')h_2,$ which gives the result.

Now suppose $\Gamma$ is the class of $(H_2,V_2)$ containing $\beta(\Lambda),$ and that $\Lambda$ is a minimal class in the preimage of $\Gamma.$  Let $h\in \Lambda,$ and let $h'\in \Gamma.$ We need to show $h'\in \beta(\Lambda).$  We have $v,v'\in V_2$ such that $\beta(h)=v'h',$ $h'=v\beta(h).$  Since $\beta$ is onto, there are elements $u,u'\in V_1$ with $\beta(u)=v,$ $\beta(u')=v'.$  We then have 
$\beta(h)=v'v\beta(h)=\beta(u'uh),$
but this means the class of $u'uh$ maps into $\Gamma.$  By minimality $u'uh\in\Lambda,$, and since $u'uh\leq uh\leq h,$ $uh\in\Lambda.$  Thus $h'=\beta(uh)\in\beta(\Lambda).$

For the last claim, suppose to the contrary that $h,h'\in V_1$ are both in minimal classes mapping into $\Lambda,$ but are not in the same class. By idempotence, $\beta(h+h') =\beta(h)\in\Gamma.$  We thus have $h+h'\leq h,$ so by minimality $h+h'\cong h.$ Likewise, $h+h'\cong h',$ so $h\cong h',$ a contradiction.
\end{proof}

\section{Connections to Logic}\label{section.logic}
\subsection{Temporal logics for forests}

We give a description of the temporal operators $\mathsf{EF}$ and $\mathsf{EX}.$  Our approach closely follows the one given in ~\cite{BSW}.

We describe the syntax and semantics of our formulas, given a fixed finite alphabet $A.$  One complication, which does not seem to be avoidable, is that we need to treat trees and forests somewhat differently.  Thus we define both tree formulas, and a proper subset of these called forest formulas, and give different semantics depending whether we are interpreting a formula in a tree or in a forest.

\begin{itemize}
\item {\bf T} is a forest formula
\item $a$ is a tree formula, for each $a\in A.$
\item every forest formula is a tree formula
\item both the class of tree formulas and the class of forest formulas are closed under boolean operations
\item if $\phi$ is a tree formula, then $\mathsf{EF}\phi,$ $\mathsf{EX}\phi$ are forest formulas
\end{itemize}

The semantics are similarly defined by mutual recursion.  There are two satisfaction relations, one for trees satisfying tree formulas, the other for forests satisfying forest formulas.

\begin{itemize}
\item $s\models_f {\bf T}$ for every forest $s\in H_A.$
\item $as\models_t a$ for every forest $s\in H_A.$
\item if $\phi$ is a forest formula, then $as\models_t \phi$ if and only if $s\models_f\phi.$
\item boolean operations have their usual interpretation
\item $s\models_f\mathsf{EF}\phi$ iff $s_x\models_t\phi,$ where $s_x$ denotes the tree rooted at some node $x$ of $s.$
\item $s\models_f\mathsf{EX}\phi$ iff $s_x\models_t\phi,$ where $x$ is a {\it root node} of $s.$
\end{itemize}

We'll call this logic $\mathsf{EF}+\mathsf{EX},$ and denote by $\mathsf{EF},$ $\mathsf{EX}$ the fragments in which only one of the two operators is used.

Intuitively, when we interpret formulas in trees, $\mathsf{EF}\phi$ means `at some time in the future $\phi$' and $\mathsf{EX}\phi$ means `at some next time $\phi$'. When we interpret such formulas in forests, we are in a sense treating the forest as though it were a tree with a phantom root node.  Observe that if $a\in A,$ we do not interpret the formula $a$ in forests at all.  Thus a formula can have different interpretations depending on whether we view it as a tree or a forest formula.  For example, as a forest formula $\mathsf{EX}a$ means `there is a root node labeled $a$' while as a tree formula it means `some child of the root is labeled $a$'.

We are primarily concerned with the forest satisfaction relation, and so we will usually drop the subscript on $\models,$ and assume that $\models_f$ is intended. If $\phi$ is a forest formula, then we denote by $L_{\phi}$ the set of all $s\in H_A$ such that $s\models\phi.$  $L_{\phi}$ is the {\it language defined by $\phi.$}

\begin{example}\label{example.efexformula}  Consider the following property of forests over $\{a,b\}$:  There is a tree component containing only $a$'s, and another tree component that contains at least one $b.$  Now consider the set $L$ of forests  $s$ that either have this property, or in which for some node $x,$ the forest of strict descendants of $x$ has the property.  The property itself is defined by the forest formula
$$\psi: \mathsf{EX}(a\wedge\neg \mathsf{EF}b)\wedge\mathsf{EX}(b\vee\mathsf{EF}b)$$
and $L$ is defined by 
$$\psi\vee\mathsf{EF}\psi.$$
In Example~\ref{example.mainexample}, we discuss the syntactic forest algebra of $L.$
\end{example}

\subsection{Correspondence of operators with wreath products}

The principal result of this paper, Theorem~\ref{thm.summary}, is the algebraic characterization of the forest languages using the operators $\mathsf{EF}$ and $\mathsf{EX},$ either separately or in combination. It will require some algebraic  preparation, in Sections~\ref{sec.ef}, \ref{sec.ex}  and~\ref{sec.efex} before we can give the precise statement of this theorem.  The bridge between the logic and the algebra is provided by the next two propositions.

Let $\phi$ be a tree formula.  Then $\phi$ can be written as a disjunction
$$\bigvee_{a\in A}(a\wedge\psi_a),$$
where each $\psi_a$ is a forest formula.  Let $\Psi=\{\psi_a:a\in A\}.$ We'll call $\Psi$ the set of forest formulas of $\phi.$  We say that a homomorphism 

$$\beta:A^{\Delta}\to (H,V)$$
recognizes $\Psi$ if the value of $\beta(s)$ determines exactly which formulas of $\Psi$ are satisfied by $s.$  To construct such a homomorphism, we can take the direct product of the syntactic algebras of $L_{\psi}$ for $\psi\in\Psi,$ and set $\beta$ to be the product of the syntactic morphisms.

The following theorem, adapted from ~\cite{BSW}, gives the connection between the $\mathsf{EF}$ operator and wreath products with ${\cal U}_1$:

\medskip\begin{prop}\label{prop.efoperator}
\noindent{(a)}  Suppose that $\phi$ is a tree formula, $\Psi$ is the set of forest formulas of $\phi,$ and that $\Psi$ is recognized by 
$$\alpha:A^{\Delta}\to (H,V).$$
Then $$\mathsf{EF}\phi$$
is recognized by  a homomorphism
$$\beta: A^{\Delta}\to (H,V)\circ {\cal U}_1,$$
where $\pi\beta=\alpha.$ 

\noindent{(b)} Suppose that $L\subseteq H_A$ is recognized by a homomorphism
$$\beta:A^{\Delta}\to (H,V)\circ {\cal U}_1.$$
Then $L$ is a boolean combination of languages of the form $\mathsf{EF}(a\wedge\phi),$ where $L_{\phi}$ is recognized by $\pi\beta.$
\end{prop}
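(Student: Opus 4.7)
My plan is to construct $\beta$ explicitly as $\alpha\otimes\gamma$ for a tailor-made $\gamma\colon (A\times H)^{\Delta}\to{\cal U}_1$. Write $\phi=\bigvee_{a\in A}(a\wedge\psi_a)$; since $\alpha$ recognizes each $\psi_a$, for each $a$ there is $T_a\subseteq H$ with $u\models\psi_a$ iff $\alpha(u)\in T_a$. I would define $\gamma$ on generators by $\gamma((a,h))=0$ (the vertical element of ${\cal U}_1$ that acts as the constant $\infty$ map) when $h\in T_a$, and $\gamma((a,h))=1$ otherwise, then extend by freeness. Setting $\beta=\alpha\otimes\gamma$ makes $\pi\beta=\alpha$ automatic, and $\beta(s)=(\alpha(s),\gamma(s^{\alpha}))$. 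The key verification is that $\gamma(s^{\alpha})=\infty$ iff $s\models\mathsf{EF}\phi$: since $0\cdot x=\infty$ in ${\cal U}_1$ and $\infty$ is horizontally absorbing, $\gamma(s^{\alpha})$ equals $\infty$ exactly when $s^{\alpha}$ has some node whose relabeled letter lies in $\gamma^{-1}(0)$, which by construction corresponds to a subtree of $s$ satisfying some disjunct $a\wedge\psi_a$. Hence $\beta$ recognizes $\mathsf{EF}\phi$ via $H\times\{\infty\}$.

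\textbf{Part (b).} Decompose $\beta=\alpha\otimes\gamma$ with $\alpha=\pi\beta$, so $\beta(s)=(\alpha(s),\gamma(s^{\alpha}))$ and $L=\beta^{-1}(X)$ is a boolean combination of the atoms $\beta^{-1}(\{(h,e)\})$. Let $B=\gamma^{-1}(0)\subseteq A\times H$. Reversing the analysis of (a), $\{s:\gamma(s^{\alpha})=\infty\}=\bigcup_{(a,h)\in B}L_{\mathsf{EF}(a\wedge\phi_h)}$, where $\phi_h$ is any forest formula with $L_{\phi_h}=\alpha^{-1}(h)$, and each such $L_{\phi_h}$ is automatically recognized by $\alpha$. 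Combining this $\mathsf{EF}$-description of the ${\cal U}_1$-coordinate with boolean combinations of $\mathsf{EF}(a\wedge\phi_h)$-languages that isolate the $\alpha$-component of $\beta(s)$ yields the required expression of $L$.

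The main obstacle I anticipate is the last step in (b): the $\gamma$-component has a clean $\mathsf{EF}$ reading, but absorbing the $\alpha$-component of $\beta(s)$ into the restricted shape $\mathsf{EF}(a\wedge\phi)$ is the conceptually delicate part. One must ensure the $\phi_h$'s are rich enough as forest formulas (using $\mathsf{EF}$ and $\mathsf{EX}$ inside $\phi_h$) to define each $\alpha$-class, after which the rest is routine bookkeeping against the wreath-product identity $\beta(s)=(\alpha(s),\gamma(s^{\alpha}))$.
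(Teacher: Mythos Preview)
The paper does not prove this proposition; it is cited from~\cite{BSW} and stated without argument. So there is no in-paper proof to compare against, and I assess your proposal on its own.

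Your part~(a) is correct and is the standard construction.

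Part~(b) has a real gap. You correctly identify that the ${\cal U}_1$-coordinate $\gamma(s^{\alpha})$ is described by the union $\bigcup_{(a,h)\in B}L_{\mathsf{EF}(a\wedge\phi_h)}$. The trouble is your assertion that the first coordinate $\alpha(s)$ can also be isolated by boolean combinations of $\mathsf{EF}(a\wedge\phi_h)$-languages. Knowing the truth value of every $\mathsf{EF}(a\wedge\phi_h)$ on $s$ amounts to knowing the set $\{(a,\alpha(t)):at\text{ is a subtree of }s\}$, and for general $(H,V)$ this set does \emph{not} determine $\alpha(s)$. In the algebra of Example~\ref{example.mainexample}, the forests $ba$ and $ba+a$ give the same such set $\{(b,a),(a,0)\}$, yet $\alpha(ba)=b$ while $\alpha(ba+a)=a+b=\infty$. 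Your proposed fix---making the $\phi_h$ ``rich enough''---misses the point: $\phi_h$ already defines $\alpha^{-1}(h)$ exactly, and the obstruction is structural, not a matter of the expressiveness of the inner formula.

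There are two clean resolutions. First, compare with Proposition~\ref{prop.exoperator}(b), whose conclusion explicitly allows the languages $L_{\psi}$ themselves alongside $\mathsf{EX}(a\wedge\psi)$; read the present statement the same way (boolean combinations of $L_{\phi}$ \emph{and} $\mathsf{EF}(a\wedge\phi)$ with $L_{\phi}$ recognized by $\pi\beta$), and your argument finishes immediately, the $\alpha$-coordinate handled by the $L_{\phi_h}$. Second, if one keeps the literal statement, one must assume $(H,V)$ is an $\mathsf{EF}$-algebra (which is the setting in~\cite{BSW}): the identity $vh+h=vh$ then yields, by an easy induction on $s$, that $\alpha(s)=\sum\alpha(a)h$ with the sum over all pairs $(a,h)$ for which $s$ has a subtree $at$ with $\alpha(t)=h$, so the $\mathsf{EF}(a\wedge\phi_h)$ data do recover $\alpha(s)$. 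Either route completes~(b); as written, your sketch does not.
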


\medskip

Here we prove an analogous result for the temporal operator $\mathsf{EX}.$

\medskip\begin{prop}\label{prop.exoperator}
\noindent{(a)}  Suppose that $\phi$ is a tree formula, $\Psi$ is the set of forest formulas of $\phi,$ and that $\Psi$ is recognized by 
$$\alpha:A^{\Delta}\to (H,V).$$
Then $$\mathsf{EX}\phi$$
is recognized by  a homomorphism
$$\alpha\otimes\beta: A^{\Delta}\to (H,V)\circ (H',V'),$$
where $\beta:(A\times H)^{\Delta}\to (H',V')$ is 1-definite.
 \medskip

\noindent{(b)} Suppose that $L\subseteq H_A$ is recognized by a homomorphism
$$\alpha\otimes\beta:A^{\Delta}\to (H,V)\circ (H',V'),$$
Suppose further that every language recognized by $\alpha$ is defined by a formula in some set $\Psi$ of formulas.
 If 
 $$\beta:(A\times H)^{\Delta}\to (H',V')$$ is 1-definite, then $L$ is a boolean combination of languages of the form $L_{\psi}$ and $\mathsf{EX}(a\wedge\psi),$ where $\psi\in\Psi.$
\end{prop}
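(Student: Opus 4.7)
For part~(a), the natural candidate is to take $\beta$ to be the canonical $1$-definite quotient homomorphism $\alpha_{A\times H,\,1}$ introduced in the $1$-definiteness subsection, applied with alphabet $A\times H$; this is $1$-definite by construction. Recall from the wreath product discussion that $(\alpha\otimes\beta)(s) = (\alpha(s),\beta(s^{\alpha}))$, where $s^{\alpha}$ is obtained from $s$ by relabeling each node $x$ whose subtree is $at$ with $(a,\alpha(t))$. Since $\beta$ is $1$-definite, $\beta(s^{\alpha})$ depends only on the set $R(s) = \{(a,\alpha(t)) : at\text{ occurs as a root-subtree of }s\}$, i.e., on the set of root labels of $s^\alpha$. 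I would then verify the semantic match: $s\models\mathsf{EX}\phi$ iff some root-subtree $at$ of $s$ satisfies $a\wedge\psi_a$, iff some $(a,h)\in R(s)$ has $h\in\alpha(L_{\psi_a})$, using that $\alpha$ recognizes every $\psi\in\Psi$. So membership in $L_{\mathsf{EX}\phi}$ is determined by $R(s)$, hence by $\beta(s^\alpha)$, hence by $(\alpha\otimes\beta)(s)$; pulling back an appropriate subset of $H\times H'$ gives recognition.

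For part~(b), write $L = (\alpha\otimes\beta)^{-1}(X)$ for some $X\subseteq H\times H'$, so $s\in L$ iff $(\alpha(s),\beta(s^\alpha))\in X$. The $1$-definiteness of $\beta$ again yields that $\beta(s^\alpha)$ is determined by the root-label set $R(s)$ of $s^\alpha$. Consequently there is a set $\mathcal Y\subseteq H\times\mathcal P(A\times H)$ such that $s\in L$ iff $(\alpha(s),R(s))\in\mathcal Y$, and $L$ decomposes as a finite union of sets of the form $\{s:\alpha(s)=h\}\cap\{s:R(s)=R\}$ for $(h,R)\in\mathcal Y$.

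By hypothesis each singleton language $\alpha^{-1}(h)$ is defined by some $\phi_h\in\Psi$, so $\{s:\alpha(s)=h\}=L_{\phi_h}$, contributing the languages of the form $L_\psi$. Moreover, ``$(a,h')\in R(s)$'' is precisely ``some root of $s$ is labeled $a$ and has child-forest in $\alpha^{-1}(h')$'', i.e., $s\models\mathsf{EX}(a\wedge\phi_{h'})$; therefore each condition $R(s)=R$ is a boolean combination of languages $L_{\mathsf{EX}(a\wedge\phi_{h'})}$ with $\phi_{h'}\in\Psi$. Combining, $L$ is a boolean combination of languages of the form $L_\psi$ and $\mathsf{EX}(a\wedge\psi)$, $\psi\in\Psi$. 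The main conceptual step, shared by both parts, is the translation between the semantic behavior of $\mathsf{EX}$ on $s$ and the root-label set of the relabeled forest $s^\alpha$; once this is in place, $1$-definiteness is exactly what makes the algebraic side see only this set, and the rest is bookkeeping.
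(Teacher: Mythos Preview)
Your proof is correct and follows essentially the same approach as the paper. Both parts hinge on taking $\beta=\alpha_{A\times H,1}$ in (a) and, in (b), using 1-definiteness to reduce $\beta(s^{\alpha})$ to the root-label set $R(s)=\{(a,\alpha(t)):at\text{ a root subtree of }s\}$, then reading off each condition ``$(a,h)\in R(s)$'' as $s\models\mathsf{EX}(a\wedge\psi_h)$; your decomposition of $L$ via $\mathcal Y\subseteq H\times\mathcal P(A\times H)$ is just a slightly more explicit bookkeeping of the same argument the paper sketches.
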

\begin{proof} 
\noindent{\it (a)} Let $\phi$ be a tree formula. Again, we can write $\phi$ as a disjunction of formulas of the form $a\wedge\psi,$ where $\psi$ is a forest formula of $\phi.$  Since $\mathsf{EX}$ commutes with disjunction, we can write
$$\mathsf{EX}\phi=\bigvee_{j=1}^m\mathsf{EX}(a_j\wedge \psi_j),$$
where each $\psi_j$ is a forest formula of $\phi.$  We now set
$$\beta=\alpha_{A\times H,1}: (A\times H)^{\Delta}/{\sim}_1,$$
  It follows that if $s=a'_1s_1+\cdots a'_ns_n,$ then
$$\alpha\otimes\beta(s)=(\alpha(s),\{(a'_i,\alpha(s_i)):1\leq i\leq n\}).$$
Thus $s\models \mathsf{EX}\phi$ if and only if the second component of $\alpha\otimes\beta(s)$ contains a pair $(a',h),$ where for some $1\leq j\leq m,$ $a'=a_j,$ and forests mapping to $h$ under $\alpha$ satisfy $\psi_j.$  So $\alpha\otimes\beta$ recognizes $\mathsf{EX}\phi$.

\noindent{\it (b)} For the converse, suppose $L$ is recognized by a homomorphism $\alpha\otimes\beta$ as described.  Let $s\in H_A,$ and write $s=a_1s_1+\cdots +a_ns_n.$  Then
$\alpha\otimes\beta(s)=(\alpha(s),h').$  By 1-definiteness of $\beta,$ the value of $h'$ is completely determined by the set of pairs $\{(a_i,\alpha(s_i)):1\leq i\leq n\}.$ The first component of $\alpha\otimes\beta(s)$ is $h\in H,$  if and only if $s\in \alpha^{-1}(h)$ which is defined by a formula $\psi_h\in\Psi.$  The second component of  contains the element $(a,h)$ if and only if $s\models \mathsf{EX}(a\wedge\psi_h).$  Thus $L$ is a boolean combination of sets of the required form.
\end{proof}
\section{$\mathsf{EF}$-algebras}\label{sec.ef}

Following ~\cite{BSW}, we define:
\begin{defn} A finite forest algebra $(H,V)$ is an $\mathsf{EF}$-algebra if it satisfies the identities
$$h+h'=h'+h,  vh+h=vh$$
for all $h,h'\in H$ and $v\in V.$  The second identity with $v=1$ gives $h+h=h.$  Thus every $\mathsf{EF}$-algebra is horizontally idempotent and commutative.
\end{defn}

The following result is proved in ~\cite{BSW}, and is the key element in the characterization of languages definable in one of the temporal logics we consider in   Section~\ref{section.logic}.  We will give a new proof, as it provides a good first illustration of how we use the reachability ideal theory introduced above in decomposition arguments.

\medskip\begin{thm}\label{thm.efwreath}
Let 
$$\alpha:A^{\Delta}\to (H,V)$$
be a homomorphism onto a forest algebra.  $(H,V)$ is an $\mathsf{EF}$-algebra if and only if $\alpha$ factors through a homomorphism
$$\beta:A^{\Delta}\to{\cal U}_1\circ\cdots\circ{\cal U}_1.$$
\end{thm}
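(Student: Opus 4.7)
The plan is to prove both implications separately. For the easier direction (wreath product $\Rightarrow\mathsf{EF}$), I will verify that $\mathcal{U}_1$ itself satisfies the defining $\mathsf{EF}$ identities, that the class of $\mathsf{EF}$-algebras is closed under wreath product (the two identities transfer componentwise via the action formula $(v_1,f)(h_1,h_2)=(v_1h_1,f(h_1)h_2)$), and that the class is closed under homomorphic image; any $\alpha$ factoring through a wreath product of copies of $\mathcal{U}_1$ then has $\mathsf{EF}$-image.

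For the converse I will induct on $|H|$, using the reachability ideal theory developed above. Two structural consequences of the $\mathsf{EF}$ identities drive the argument: first, $\Gamma_{\mathsf{min}}=\{\infty\}$, because the identity $v\infty+\infty=v\infty$ combined with absorption $v\infty+\infty=\infty$ forces $v\infty=\infty$; and second, every reachability class is a singleton, because $h=vh'$ and $h'=v'h$ combined with $vh+h=vh$ applied in both directions give $h+h'=h=h'$. Let $\Gamma_1,\ldots,\Gamma_r$ be the subminimal classes of $(H,V)$. If $r\geq 2$, Lemma~\ref{lemma.subminimal} factors $\alpha$ (which equals $\alpha_{\Gamma_{\mathsf{min}}}$ since $\Gamma_{\mathsf{min}}$ is a singleton) through the direct product $\prod_{j=1}^{r}\alpha_{{\mbox{\tiny$\geq$}}\Gamma_j}$; each component maps into a strictly smaller $\mathsf{EF}$-algebra, and the inductive hypothesis together with the embedding of direct products into wreath products from the Wreath Products subsection finishes this case.

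The remaining case is a unique subminimal class $\Gamma=\{g\}$. The quotient $\alpha_\Gamma:A^\Delta\to(H_\Gamma,V_\Gamma)$ has $|H_\Gamma|=|H|-1$ and inductively factors through a wreath product of copies of $\mathcal{U}_1$; by associativity it suffices to find $\gamma:(A\times H_\Gamma)^\Delta\to\mathcal{U}_1$ such that $\alpha$ factors through $\alpha_\Gamma\otimes\gamma$, recovering the $g$-versus-$\infty$ information collapsed by $\alpha_\Gamma$. Partitioning $V=V_g\sqcup V_0$ with $V_g=\{v:vg=g\}$ a submonoid and $V_0=\{v:vg=\infty\}$ an ideal, I call a letter $(a,h)\in A\times H_\Gamma$ \emph{bad} if either $h\neq\infty$ with $\alpha(a)h=\infty$, or $h=\infty$ with $\alpha(a)\in V_0$, and set $\gamma(a,h)=0$ on bad letters and $1$ otherwise, so that $\gamma(s^{\alpha_\Gamma})=\infty$ iff $s^{\alpha_\Gamma}$ contains a bad node. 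The main obstacle will be verifying that this local test captures $\alpha(s)=\infty$ exactly. The direction ``bad node $\Rightarrow\alpha(s)=\infty$'' is immediate propagation. The converse relies on a structural fact special to the unique subminimal case: the horizontal sum of two non-$\infty$ elements is non-$\infty$. This follows because for $h\neq\infty$ subminimality gives $g=vh$ for some $v$, so $vh+h=vh$ yields $h+g=g$; then associativity forces $(h+h')+g=h+(h'+g)=h+g=g$, so $h+h'$ cannot equal $\infty$. An induction on $|s|$ using this fact then shows that the absence of bad labels anywhere in $s^{\alpha_\Gamma}$ forces $\alpha(s)\neq\infty$, completing the factorization.
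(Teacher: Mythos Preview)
Your proposal is correct and follows essentially the same route as the paper: both directions match, the induction on $|H|$ with the reduction via Lemma~\ref{lemma.subminimal} to a unique subminimal class is the same, and your definition of $\gamma$ coincides with the paper's $\beta$ (your partition $V=V_g\sqcup V_0$ is just a repackaging of the case split on whether $\alpha(a)\cdot h^*=\infty$). Your key structural fact ``sum of two non-$\infty$ elements is non-$\infty$'' via $h+g=g$ is exactly the paper's contradiction argument $h^*+\sum\alpha(t_i)=h^*$ versus $h^*+\infty=\infty$, phrased slightly more directly.
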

\begin{proof} For the `if' direction, suppose $\alpha:A^{\Delta}\to (H,V)$ factors through such a homomorphism $\beta.$ Then $(H,V)$ divides ({\it i.e.,} is a quotient of a subalgebra of) the iterated wreath product.  Since it is obvious that the identities defining $\mathsf{EF}$-algebras are preserved under subalgebras and quotients, and are satisfied by ${\cal U}_1,$ we only need to prove that the wreath product of two $\mathsf{EF}$-algebras is an $\mathsf{EF}$-algebra. 

Clearly idempotence and commutativity of the horizontal monoid are preserved by the wreath product, since we are just forming the direct product  of the component horizontal monoids.  It remains to show that the identity $vh+h=vh$ is preserved by the wreath product.  We have
\begin{eqnarray*}
(v,f)(h_1,h_2)+(h_1,h_2) &=& (vh_1,f(h_1)h_2)+(h_1,h_2)\\
&=& (vh_1+h_1,f(h_1)h_2+h_2)\\
&=& (vh_1, f(h_1)h_2)\\
&=& (v,f)(h_1,h_2).
\end{eqnarray*}

We now prove the converse.  We suppose that $(H,V)$ is an $\mathsf{EF}$-algebra, and show by induction on $|H|$ that any homomorphism into $(H,V)$ factors through such an iterated wreath product. If $|H|=2$ then the $\mathsf{EF}$ identities force $H=\{0,\infty\},$ and either $|V|=1,$ or $(H,V)={\cal U}_1.$  So we may suppose $|H|>2.$  $H$ has trivial reachability classes, because $h\leq h'$ implies $h=vh'=vh'+h'=h+h',$ and similarly $h'\leq h$ implies $h'=h+h',$ so elements in the same reachability class are all equal. In particular $\Gamma_{\mathsf{min}}$ has a single element, so $(H_{\Gamma_{\mathsf{min}}},V_{\Gamma_{\mathsf{min}}})=(H,V).$  Thus by Lemma~\ref{lemma.subminimal}, any homomorphism onto $(H,V)$ factors through a direct product, and hence a wreath product, of $\mathsf{EF}$-algebras with a unique subminimal reachability class. So it suffices to prove the theorem in the case where $(H,V)$ has a single trivial subminimal class $\Gamma=\{h^*\}.$
We claim that in this case
$\alpha:A^{\Delta}\to (H,V)$
factors through
$$\gamma=\alpha_{\Gamma}\otimes\beta:A^{\Delta}\to (H_{\Gamma},V_{\Gamma})\circ {\cal U}_1$$
for some homomorphism $\beta:(A\times H_{\Gamma})^{\Delta}\to {\cal U}_1.$ Since $|H_{\Gamma}|=|H|-1,$ the
desired result follows from the inductive hypothesis. 

We define the homomorphism $\beta$ by giving the value of $\beta(a,h)\in\{1,0\}$ for every $(a,h)\in A\times H_{\Gamma}.$
If $h>\Gamma$ we set $\beta(a,h)=0$ if  $\alpha(a)\cdot h=\infty$ in $(H,V)$ and $\beta(a,h)=1$ otherwise.  If $h$ is the minimal element of $H_{\Gamma},$ we set $\beta(a,h)=0$ if $a\cdot h^*=\infty,$ and $\beta(a,h)=1$ otherwise.

We first establish the following fact:  $\alpha(s)=\infty\in H$ if and only if there is a node $x$ in $s$ such that the tree rooted at $x$ is $at,$ with $\beta(a,\alpha_{\Gamma}(t))=0.$ It follows trivially from the definition that if such a node exists then $\alpha(at)=\infty\in H,$ and thus $\alpha(s)=\infty,$ since $\infty$ is the unique minimal element of $H.$ Conversely, suppose that $\alpha(s)=\infty.$ There must be some node $x$ such that the tree $at$ rooted at $x$ has $\alpha(at)=\infty$:  the alternative would be  that $s=t_1+\cdots + t_r,$ where each $t_i$ is a tree with $\alpha(t_i)\neq \infty$ but $\alpha(s)=\infty.$  This cannot occur, because then by the uniqueness of the subminimal element, each $\alpha(t_i)\geq h^*,$ and the identities for $\mathsf{EF}$-algebras would then give
$$h^*=h^*+\alpha(t_1)+\cdots+\alpha(t_r)=h^*+\alpha(s)+\infty=\infty,$$
a contradiction.  We thus choose a node of maximal depth such that the tree $at$ rooted at this node has $\alpha(at)=\infty.$  By the maximal depth condition, no tree component of $t$ is mapped by $\alpha$ to $\infty,$ and by the argument we just gave $\alpha(t)\neq\infty.$  So $\alpha(t)\geq h^*,$ and thus $\beta(a,\alpha_{\Gamma}(t))=0.$

We now have
$$\gamma(s)=(\alpha_{\Gamma}(s),\beta(s^{\alpha_{\Gamma}})).$$
The left-hand component of $\gamma(s)$ determines $\alpha(s)$ except for distinguishing between $\alpha(s)=h^*$ and $\alpha(s)=\infty.$ The fact that we just proved shows that the right-hand component of $\gamma(s)$ is $\infty$ if and only if $\alpha(s)=\infty.$ Thus $\gamma(s)$ completely determines $\alpha(s).$
\end{proof}

A classic result of Stiffler~\cite{STI} shows that a right transformation monoid $(Q,M)$ divides an iterated wreath product of  copies of the transformation monoid $U_1=(\{0,1\},\{0,1\})$ if and only if $M$ is ${\cal R}$-trivial.  In terms of transformation monoids this means there is no pair of distinct states $q\neq q'\in Q$ such that $qm=q',q'm'=q$ for some $m,m'\in M.$  Since forest algebras are left transformation monoids, the analogous result would suggest that a forest algebra $(H,V)$ divides an iterated wreath product of copies of ${\cal U}_1$ if and only if $V$ is ${\cal L}$-trivial---that is, if and only if $(H,V)$ has trivial reachability classes.  We have already seen that this condition is necessary.

However, the following example shows that it is not sufficient.
\begin{example}\label{example.mainexample}
Figure~\ref{fig.counterexample} below defines the syntactic forest algebra of the language $L$ of Example~\ref{example.efexformula}. The nodes in the diagram represent the elements of the horizontal monoid, and the arrows give the action of a generating set of letters $A=\{a,b\}$ on the horizontal monoid. The letter transitions, together with the conventions about idempotence and commutativity, and the meaning of 0 and $\infty,$ completely determine the addition and the action.
 \begin{figure}[htb]\label{figure.mainexample}
       \begin{center}\vspace*{-2mm}
            \begin{tikzpicture}[->,node distance=1.5cm,semithick]
	      \tikzset{every state/.style={inner sep=0pt}}
              \node[state] (0)              {$0$};
              \node[state] (A) [right of=0] {$a$};
              \node[state] (B) [right of=A] {$b$};
              \node[state] (C) [right of=B] {$a\!+\!b$};
            
              \path (0) edge             node[above] {$a$}   (A)
		    (A)	edge[loop above] node        {$a$}   (A)
		    (A) edge		 node[above] {$b$}   (B)
		    (B) edge[loop above] node        {$a,b$} (B)
		    (0) edge[bend right] node[below] {$b$}   (B)
		    (C) edge[loop above] node        {$a,b$} (C);
            \end{tikzpicture}\vspace*{-3mm}
	\caption{An algebra with trivial reachability classes that is not an $\mathsf{EF}$-algebra}\label{fig.counterexample}
      \end{center}
     \end{figure}
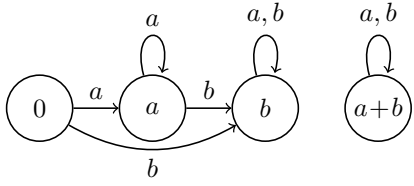 
Since $\infty = a+ b= a+ba\neq ba = b,$ this is not an $\mathsf{EF}$-algebra, but the reachability classes are singletons.
\end{example}

\section{Definiteness}\label{sec.ex}
\subsection{Definite homomorphisms}
Let $k>0.$  A finite semigroup $S$ is said to be {\it reverse $k$-definite} if it satisfies the identity
$$x_1x_2\cdots x_ky=x_1\cdots x_k.$$
The reason for the word `reverse' is that definiteness of semigroups was originally formulated in terms of right transformation monoids, so the natural analogue of definiteness in the setting of forest algebras corresponds to reverse definiteness in semigroups.  Observe that the notions of definiteness and reverse definiteness in semigroups do not really make sense for monoids, since only the trivial monoid can satisfy the underlying identities.
For much the same reason, we define definiteness for forest algebras not as a property of the algebras themselves, but of homomorphisms
 $$\alpha:A^{\bf \Delta}\to (H,V).$$
 
 The {\it depth} of a context $p\in V_A$ is defined to be the depth of its hole; so for instance a context with its hole at a root node has depth 0. We say that the homomorphism $\alpha$ is $k$-definite, where $k>0,$ if for every $p\in V_A$ of depth at least $k,$ and for all $s,s'\in H_A,$ $\alpha(ps)=\alpha(ps').$    Easily, if $\alpha_1,\alpha_2$ are $k$-definite homomorphisms, then so are $\alpha_1\times\alpha_2$ and $\psi\alpha_1,$ where $\psi:(H,V)\to (H',V')$ is a homomorphism of forest algebras.
 
A context is {\it guarded} if it has depth at least 1, that is, if the hole is not at the root.  We denote by $V_A^{\mathsf{gu}}$ the subsemigroup of $V_A$ consisting of the guarded contexts.
\medskip\begin{lemma}\label{lemma.definite} Let $k>0.$ A homomorphism $\alpha:A^{\bf \Delta}\to (H,V)$ is $k$-definite if and only if
$\alpha(V_A^{\mathsf{gu}})$ is a reverse $k$-definite semigroup.
\end{lemma}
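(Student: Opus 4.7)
The approach rests on one simple factorization: any context $p\in V_A$ of depth $d\ge k$ can be written in $V_A$ as $p=p_1 p_2\cdots p_k p'$, where each $p_i$ is a depth-$1$ (hence guarded) context and $p'$ has depth $d-k\ge 0$. One obtains this by peeling off, one layer at a time, the nodes on the root-to-hole path of $p$; at each step the top node of the path becomes a depth-$1$ context whose hole is the subforest-position under that node.

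For the direction $(\Rightarrow)$, pick $v_1,\ldots,v_k,w\in\alpha(V_A^{\mathsf{gu}})$ and choose guarded contexts $p_1,\ldots,p_k,q$ with $v_i=\alpha(p_i)$ and $w=\alpha(q)$. Since $p_1\cdots p_k$ has depth $\ge k$, $k$-definiteness of $\alpha$ gives $\alpha(p_1\cdots p_k\cdot s)=\alpha(p_1\cdots p_k\cdot(qs))$ for every $s\in H_A$, which by the homomorphism property rewrites as $\alpha(p_1\cdots p_k)\cdot\alpha(s)=\alpha(p_1\cdots p_k q)\cdot\alpha(s)$. Replacing $(H,V)$ by the image sub-algebra if necessary, we may assume $\alpha$ is onto, and faithfulness of the action of $V$ on $H$ then yields $v_1\cdots v_k=v_1\cdots v_k w$ in $V$, which is exactly the reverse $k$-definite identity.

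For the direction $(\Leftarrow)$, fix $p$ of depth $\ge k$ and factor $p=p_1\cdots p_k p'$ as above. Since $\alpha(ps)=\alpha(p_1\cdots p_k)\cdot\alpha(p's)$ for every $s\in H_A$, it suffices to show that $\alpha(p_1\cdots p_k)\cdot h$ is the same for every $h\in\alpha(H_A)$. I prove, by induction on the number of nodes of $u\in H_A$, that $\alpha(p_1\cdots p_k)\,\alpha(u)=\alpha(p_1\cdots p_k)\cdot 0$. The base $u=0$ is immediate. For the step, write $u=au_0+t_2+\cdots+t_n$ with $a\in A$ and trees $t_i$; the depth-$1$ context $q$ sending a forest $w$ to $aw+t_2+\cdots+t_n$ satisfies $u=q\cdot u_0$, and $u_0$ has strictly fewer nodes than $u$. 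Then
$$\alpha(p_1\cdots p_k)\,\alpha(u)=\alpha(p_1\cdots p_k q)\,\alpha(u_0)=\alpha(p_1\cdots p_k)\,\alpha(u_0),$$
where the last equality uses reverse $k$-definiteness, and the inductive hypothesis closes the step.

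The main points of care are the factorization of a depth-$\ge k$ context into $k$ guarded pieces and the dual trick of expressing any nonempty forest $u$ as $q\cdot u_0$ with $q$ guarded and $u_0$ strictly smaller; both are geometrically intuitive but require a careful look at how context composition interacts with the depth of the hole. The reduction to surjective $\alpha$ in $(\Rightarrow)$ is standard, but worth flagging since faithfulness of the $V$-action on $H$ is needed to pass from equality of actions on $\alpha(H_A)$ to equality in $V$.
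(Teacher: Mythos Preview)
Your proof is correct and follows essentially the same route as the paper's. The forward direction is identical. In the backward direction, the paper uses a slightly slicker one-shot trick: any nonempty forest $s$ can be written as $q\cdot 0$ for a single guarded context $q$ (put the hole below a leaf), so reverse $k$-definiteness gives $\alpha(p_1\cdots p_k q\cdot 0)=\alpha(p_1\cdots p_k\cdot 0)$ immediately, whereas you unroll this into an induction on the number of nodes of $u$. Both arguments are fine; yours is a bit longer but makes the same point, and your explicit reduction to surjective $\alpha$ before invoking faithfulness is a useful clarification that the paper leaves implicit.
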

\begin{proof} Let $\alpha$ be $k$-definite, and let $p_1,\ldots, p_k\in V_A^{\mathsf{gu}}.$  Then the context $p=p_1\cdots p_k$ has depth at least $k.$ Thus if $q\in V_A^{\mathsf{gu}},$ and $s\in H_A,$ we have
$$\alpha(p_1\cdots p_kqs)=\alpha(pqs)=\alpha(ps)=\alpha(p_1\cdots p_ks).$$
As this holds for arbitrary $s,$ faithfulness of the action implies
$$\alpha(p_1\cdots p_kq)=\alpha(p_1\cdots p_k),$$
and thus $\alpha(V_A^{\mathsf{gu}})$ is a reverse $k$-definite semigroup.

Conversely, suppose that $\alpha(V_A^{\mathsf{gu}})$ is a reverse $k$-definite semigroup.  Let $p$ be a context of depth at least $k.$  By following the path from the hole of $p$ to a root, we obtain a factorization $p=q_1\cdots q_r,$ where $r\geq k,$ and each $q_i$ has the form $t+a\Box + t',$ where $a\in A$ and $t,t'\in H_A.$  In particular, we can write $p=p_1\cdots p_k,$ where each $p_i\in V_A^{\mathsf{gu}}.$  Let $s\in H_A,$ with $s\neq 0.$  Then we can write $s=q\cdot 0,$ where $q\in V_A^{\mathsf{gu}}.$  Thus
$$\alpha(ps)=\alpha(p_1\cdots p_kq\cdot 0)=\alpha(p_1\cdots p_k\cdot 0)=\alpha(p\cdot 0).$$
As this holds for arbitrary $s,$ $\alpha$ is $k$-definite.
\end{proof}

\medskip

\begin{defn} An $\mathsf{EX}$-homomorphism is a homomorphism that is $k$-definite for some $k\in\mathbb N$.
\end{defn}

\subsection{Free $k$-definite algebra}
We construct what we will call {\it free $k$-definite algebra} over an alphabet $A.$ This is a slight abuse of terminology, since as we noted above, it is the homomorphism into this algebra, and not the algebra itself, that is $k$-definite. We do this by recursively defining a sequence of congruences $\sim_k$ on $A^{\Delta}.$  If $k=0,$ then $\sim_0$ is just the trivial congruence that identifies all forests.  If  $k\geq 0$ and $\sim_k$ ha been defined then we associate to each forest $s=a_1s_1+\cdots a_rs_r,$ where each $a_i\in A,$ $s_i\in H_A,$ the set
$$T_s^{k+1}=\{(a_i,[s_i]_{\sim_k}):1\leq i\leq r\},$$
where $[]_{\sim_k}$ denotes the $\sim_k$-class of a forest.  We then define $s\sim_{k+1}s'$ if and only if $T_s^{k+1}=T_{s'}^{k+1}.$
\medskip\begin{prop}\label{prop.freedefinite}  Let $k\geq 0.$ Then 
$\sim_{k+1}$ refines $\sim_k.$
$\sim_k$ is a congruence of finite index on $A^{\Delta},$ with a horizontally idempotent and commutative quotient.
 \end{prop}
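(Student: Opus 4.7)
My plan is a sequence of inductions on $k$, the main ingredient being that $T_s^{k+1}$ is defined as a \emph{set} (not a multiset or a sequence), so the target structure of each $\sim_k$-class is built out of unordered, non-repeating pairs. This feature will automatically force horizontal idempotence and commutativity in the quotient.

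First I would prove that $\sim_{k+1}$ refines $\sim_k$ by induction on $k$. The base case $k=0$ is trivial since $\sim_0$ has a single class. For the step, assume $\sim_k$ refines $\sim_{k-1}$. If $s \sim_{k+1} s'$, write $s = a_1 s_1 + \cdots + a_r s_r$ and $s' = a'_1 s'_1 + \cdots + a'_{r'} s'_{r'}$. Each pair $(a_i,[s_i]_{\sim_k}) \in T_s^{k+1}$ coincides with some pair $(a'_j,[s'_j]_{\sim_k}) \in T_{s'}^{k+1}$, and by the inductive hypothesis we have $[s_i]_{\sim_k} \subseteq [s_i]_{\sim_{k-1}} = [s'_j]_{\sim_{k-1}}$. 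Doing this in both directions shows $T_s^k = T_{s'}^k$, hence $s \sim_k s'$.

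Next I would verify that each $\sim_k$ is a congruence, using the sufficient criterion recalled in Section~\ref{sec.forestalgebras}: it is enough to show $s \sim_k s'$ implies $s+t \sim_k s'+t$ and $as \sim_k as'$. For the first, observe $T_{s+t}^k = T_s^k \cup T_t^k$ and $T_{s'+t}^k = T_{s'}^k \cup T_t^k$, which agree when $T_s^k = T_{s'}^k$. For the second, with $k \geq 1$ we have $T_{as}^k = \{(a,[s]_{\sim_{k-1}})\}$ and $T_{as'}^k = \{(a,[s']_{\sim_{k-1}})\}$; by the refinement property just established, $s \sim_k s'$ forces $s \sim_{k-1} s'$, and these singletons agree. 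The case $k=0$ is trivial. Finite index I would show by induction: if $\sim_k$ has $N_k$ classes, then $T_s^{k+1}$ is a subset of $A \times (H_A/{\sim_k})$, so there are at most $2^{|A|\cdot N_k}$ possible values of $T_s^{k+1}$, bounding $N_{k+1}$.

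Finally, horizontal idempotence and commutativity in the quotient amount to $s+s \sim_k s$ and $s+t \sim_k t+s$. Both are immediate from the set-theoretic union: $T_{s+s}^k = T_s^k \cup T_s^k = T_s^k$ and $T_{s+t}^k = T_s^k \cup T_t^k = T_t^k \cup T_s^k = T_{t+s}^k$. There is no serious obstacle here; the design choice of making $T_s^{k+1}$ a set does all the work, and the only place to take a bit of care is the base case $k=0$ where $\sim_0$ is the single-class congruence and each verification is vacuous.
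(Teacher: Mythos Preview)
Your proof is correct and follows essentially the same approach as the paper: both hinge on the observation that $T_{s+t}^{k}=T_s^{k}\cup T_t^{k}$ (giving additive congruence, idempotence, and commutativity) and that $T_{as}^{k}=\{(a,[s]_{\sim_{k-1}})\}$ together with refinement (giving the letter case). The only organizational difference is that the paper runs a single induction in which refinement is derived from the idempotent/commutative quotient hypothesis via the canonical sum $s\sim_k\sum_{(a,[t])\in T_s^{k+1}}at$, whereas you first prove refinement separately by a direct element-chase and then use it for the letter-congruence step; this modular separation is arguably cleaner, but the content is the same.
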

\begin{proof} Obviously $\sim_k$ is an equivalence relation of finite index.  We prove by induction on $k$ that it is also a congruence with a horizontally idempotent and commutative quotient.  The case $k=0$ is obvious.  Assume now that $\sim_k$ is a congruence with an idempotent and commutative quotient.  If $s\sim_{k+1} s',$ then idempotence and commutativity of $\sim_k$ gives
 $$s\sim_k \sum_{(a,[t])\in T_s^{k+1}}at\sim_k s',$$ which proves the first claim.  We further have
$$T_{as}^{k+1}=\{(a,[s]_{\sim_k})\}=\{(a,[s']_{\sim_k})\}=T_{as'}^{k+1},$$ 
so $as\sim_{k+1}as'.$  Moreover, if $s_i\sim_{k+1}s_i'$ for $i=1,2,$ then $s_1+s_2\sim_{k+1}s_1'+s_2',$  since addition of forests corresponds to union of the associated sets. Since the equivalence is preserved under application of a letter and under addition, it is a congruence.  The observation about addition and union implies that the quotient is idempotent and commutative.
\end{proof}

Intuitively, $s\sim_k s'$ means that the forests $s$ and $s'$ are identical at the $k$ levels closest to the root, up to idempotent and commutative equivalence.  In fact, this intuition provides an equivalent characterization of $\sim_k$, which we give below.  We omit the simple proof.

\medskip\begin{lemma}\label{lemma.icdef} Let $s, s'\in H_A$ and $k>0.$  Let $\bar s,$ $\bar{s'},$ denote, respectively, the forests obtained from $s$ and $s'$ by removing all the nodes at depth $k$ or more.  Then $s\sim_k s'$ if and only if $\bar s$ and $\bar{s'}$ are idempotent-and-commutative equivalent.
\end{lemma}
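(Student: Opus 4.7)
My plan is to prove the equivalence by induction on $k \geq 1$. The base case $k=1$ is immediate: $\bar{s}$ and $\bar{s'}$ are forests of isolated root nodes, two such forests are idempotent-and-commutative equivalent if and only if they carry the same set of labels, and $T_s^1 = \{(a_i, [s_i]_{\sim_0})\}$ encodes exactly this set since $\sim_0$ is trivial.

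For the inductive step, I write $s = a_1 s_1 + \cdots + a_r s_r$ so that the depth-$(k+1)$ truncation $\bar{s}$ decomposes as $\sum_i a_i s_i^{(k)}$, where $s_i^{(k)}$ denotes the depth-$k$ truncation of $s_i$ (and similarly for $s'$). In the forward direction, starting from $T_s^{k+1} = T_{s'}^{k+1}$, the strategy is to fix a representative $t_{(a,c)}$ of each class $c$ appearing in the common set and reduce both $\bar{s}$ and $\bar{s'}$ to the same canonical form $\sum_{(a,c)} a\, t_{(a,c)}^{(k)}$. This uses the inductive hypothesis to replace each $s_i^{(k)}$ by the depth-$k$ truncation of its chosen representative via idempotent-and-commutative operations, followed by operations (ii) and (iii) at the root to collapse duplicates. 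For the converse, starting from $\bar{s}$ and $\bar{s'}$ idempotent-and-commutative equivalent, Lemma~\ref{lemma.ic_equivalence} applied to the quotient morphism onto $A^\Delta/{\sim_{k+1}}$ (whose codomain is horizontally idempotent and commutative by Proposition~\ref{prop.freedefinite}) gives $\bar{s} \sim_{k+1} \bar{s'}$; the remaining job is to verify $s \sim_{k+1} \bar{s}$, which unravels to $s_i \sim_k s_i^{(k)}$ for each $i$, and this follows from the inductive hypothesis because truncating $s_i^{(k)}$ again at depth $k$ is a no-op.

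The one point that deserves care is the in-situ substitution step in the forward direction: we need to know that the three operations defining idempotent-and-commutative equivalence may be applied inside a subforest, not only at the root. This is guaranteed by the paper's formulation of operations (i)--(iii) under an arbitrary context $p \in V_A$. Given that, the rest of the argument is routine bookkeeping with the canonical form.
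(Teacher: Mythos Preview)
Your argument is correct. The paper actually omits its own proof of this lemma, remarking only that it is simple, so there is nothing to compare against; your induction on $k$ with the canonical-form reduction in the forward direction and the appeal to Lemma~\ref{lemma.ic_equivalence} and Proposition~\ref{prop.freedefinite} in the converse is a clean and complete way to fill the gap.
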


Let us denote by $\alpha_{A,k}$ the homomorphism from $A^{\Delta}$ onto its quotient by $\sim_k.$  In the case where $k=1,$ we will identify $H_A/{\sim}_1$ with the monoid $({\cal P}(A),\cup),$ and the horizontal component of $\alpha_{A,1}$ with the map that sends each forest to the set of its root nodes.

The following theorem gives both the precise sense in which this is the `free $k$-definite forest algebra', as well as the wreath product decomposition of $k$-definite homomorphisms into 1-definite homomorphisms into a forest algebra with horizontal monoid $\{0,\infty\}.$

\medskip\begin{thm}\label{thm.def} Let $\alpha:A^{\Delta} \to (H,V)$ be a homomorphism onto a finite forest algebra.  Let $k>0.$  The following are equivalent.
\begin{itemize}
\item[\it (a)] $\alpha$ is $k$-definite.
\item[\it (b)] $\alpha$ factors through $\alpha_{A,k}.$
\item[\it (c)] $\alpha$ factors through 
$$\beta_1\otimes\cdots\otimes\beta_k:A^{\Delta} \to(H_1,V_1)\circ\cdots\circ (H_k,V_k),$$
where each $$\beta_i:(A\times H_1\times\cdots \times H_{i-1})^{\Delta}\to (H_i,V_i)$$
is 1-definite.
\item[\it (d)] $\alpha$ factors through an iterated wreath product of  1-definite homomorphisms into ${\cal U}_2.$
\end{itemize}
\end{thm}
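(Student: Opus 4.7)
The plan is to prove the four-way equivalence via the cycle (a)$\Rightarrow$(b)$\Rightarrow$(c)$\Rightarrow$(a) together with (c)$\Leftrightarrow$(d). The animating idea is that $\sim_k$ is designed to capture exactly what a $k$-definite homomorphism can detect, so (a) and (b) are essentially tautological restatements of each other, while (c) and (d) realize the free quotient $\alpha_{A,k}$ concretely as a wreath product.

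For (a)$\Rightarrow$(b), I would use $k$-definiteness to argue that any subtree rooted at a node of depth at least $k$ may be replaced by $0$ without changing the $\alpha$-value, since the context leading to that subtree has depth at least $k$. Iterating this replacement at every depth-$k$ node yields $\alpha(s)=\alpha(\bar s)$, where $\bar s$ is the truncation of $s$ at depth $k$. Combining with Lemma \ref{lemma.icdef} and Lemma \ref{lemma.ic_equivalence}, when $s\sim_k s'$ the truncations $\bar s$ and $\bar{s'}$ are idempotent-and-commutative equivalent, so $\alpha(s)=\alpha(\bar s)=\alpha(\bar{s'})=\alpha(s')$; hence $\alpha$ factors through $\alpha_{A,k}$.

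For (b)$\Rightarrow$(c), I would induct on $k$ to show that $\alpha_{A,k}$ itself admits a $k$-factor decomposition of the required form. The base $k=1$ is immediate. For the step $k\to k+1$, define $\beta_{k+1}\colon(A\times H_k)^{\Delta}\to (H',V')$ to be the 1-definite ``root-label set'' homomorphism on the alphabet $A\times H_k$; the recursive definition of $\sim_{k+1}$ in terms of $\sim_k$-classes of child subforests then makes $\alpha_{A,k+1}$ factor through $\alpha_{A,k}\otimes\beta_{k+1}$, and substituting the inductive decomposition of $\alpha_{A,k}$ produces the desired $(k+1)$-factor wreath product. For (c)$\Rightarrow$(a), a parallel induction shows that the $i$-th component of a wreath product of 1-definite homomorphisms $\beta_1,\ldots,\beta_k$ depends only on levels $0,\ldots,i-1$ of its input; so the entire wreath product depends only on the top $k$ levels. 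Since any context $p$ of depth at least $k$ determines the top $k$ levels of $ps$ independently of $s$, this forces the wreath product, and hence $\alpha$, to be $k$-definite.

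For (c)$\Leftrightarrow$(d), the direction (d)$\Rightarrow$(c) is immediate because a 1-definite homomorphism into ${\cal U}_2$ is in particular 1-definite. For (c)$\Rightarrow$(d), I would show that any 1-definite homomorphism $\beta$ on an alphabet $B$ factors through the direct product of $|B|$ indicator maps $\beta_b\colon B^{\Delta}\to{\cal U}_2$ defined by $\beta_b(b)=0$ and $\beta_b(b')=c_0$ for $b'\neq b$; each $\beta_b$ detects exactly the presence of $b$ among the roots, and their direct product recovers the full root-label set, through which $\beta$ factors. A direct product embeds trivially into the corresponding wreath product, as noted in the section on wreath products. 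Substituting this expansion for each factor $\beta_i$ in the wreath product of (c), while reindexing the alphabets of the subsequent factors by precomposition with the appropriate quotient maps, produces the required wreath product of 1-definite maps into ${\cal U}_2$. The main technical obstacle lies here: since the alphabet of $\beta_{i+1}$ involves the codomain of $\beta_i$, replacing $\beta_i$ by a refining wreath product forces a compatible relabeling of all later factors, and this bookkeeping is the step whose routine verification would need to be written out carefully.
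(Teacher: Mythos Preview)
Your proposal is correct and follows essentially the same approach as the paper. The paper organizes the cycle as (a)$\Rightarrow$(b)$\Rightarrow$(c)$\Rightarrow$(d)$\Rightarrow$(a), whereas you close the loop at (c)$\Rightarrow$(a) and handle (c)$\Leftrightarrow$(d) separately; the underlying arguments are the same (your (c)$\Rightarrow$(a) induction is exactly the paper's (d)$\Rightarrow$(a) step, which shows that adjoining a 1-definite factor to a $k$-definite homomorphism yields a $(k{+}1)$-definite one, without using anything special about ${\cal U}_2$). Your identification of the alphabet-relabeling bookkeeping in (c)$\Rightarrow$(d) is apt---the paper glosses over this by simply asserting ``it will suffice to show that $\alpha_{A,1}$ factors through a wreath product of 1-definite homomorphisms into ${\cal U}_2$''---and your proposed fix (precomposing later factors with the quotient maps) is the standard way to handle it.
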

\begin{proof}
\noindent $({\it (a)}\Rightarrow {\it (b)})$. 
Let $s\sim_k s'$ and let $\bar{s},$ $\bar{s'}$ denote the forests obtained by removing all nodes at depth $k$ or more from $s$ and $s'.$  Let $\alpha:A^{\Delta}\to (H,V)$ be $k$-definite. By Lemma~\ref{lemma.icdef}, $\bar s$ and $\bar{s'}$ are idempotent-and-commutative equivalent. Thus by Lemma~\ref{lemma.ic_equivalence}, $\alpha(\bar s)=\alpha(\bar{s'}).$ If $t_2\in H_A$ is obtained from $t_1$ by removing a node of depth $k$ together with all its descendants, then there is a context $p$ with a hole at depth $k$ and a forest $u$ such that $t_1=pu,$ $t_2=p0$  Thus by $k$-definiteness, $\alpha(t_2)=\alpha(t_1).$  Consequently for any forest $s,$ $\alpha(\bar{s})=\alpha(s).$  Thus with $s\sim_k s'$ as above, we have 
$$\alpha(s)=\alpha(\bar{s})=\alpha(\bar{s'})=\alpha(s').$$ 
So $\alpha$ factors through $\alpha_{A,k},$ as required.

\medskip

\noindent $({\it (b)}\Rightarrow {\it(c)})$. It suffices to show that for $k\geq 1,$  $\alpha_{A,k+1}$ factors through a homomorphism $\gamma=\alpha_{A,k}\otimes \beta,$ where $\beta$ is 1-definite.  In fact, we can choose $\beta=\alpha_{A\times H_A/{\sim}_k,1}.$  We then have, for $s\in H_A,$
\begin{eqnarray*}
\gamma(s)&=& ([s]_{\sim_k},\beta(s^{\alpha_{A,k}}))\\
&=& ([s]_{\sim_k},T_s^{k+1})
\end{eqnarray*}
so that in fact $s\sim_{k+1}s'$ if and only if $\gamma(s)=\gamma(s').$  (That is, the two homomorphisms factor through one another, and so are essentially identical.)
 
 \medskip
 
 \noindent $({\it (c)}\Rightarrow {\it (d)})$  It will suffice to show that for any finite alphabet $A,$ the homomorphism $\alpha_{A,1}$ factors through a wreath product of 1-definite homomorphisms into ${\cal U}_2.$  In fact, we will show that $\alpha_{A,1}$ is isomorphic to a direct product of such homomorphisms.  For each $a\in A$ define
 $$\beta_a:A^{\Delta}\to {\cal U}_2$$
 by setting $\beta_a(a)=0,$ $\beta_a(b)=c_0$ for $b\in A-\{a\}.$   If $s\in H_A,$ then $\beta_a(s)=\infty$ if some root node of $s$ is $a,$ and $\beta_a(s)=0$ otherwise.  In particular, $\beta_a(s)$ depends only on the set of labels of the root nodes of $s$ and so is 1-definite.  Consider the direct product
 $$\beta=\biggl(\prod_{a\in A}\beta_a\biggr):A^{\Delta}\to\prod_{a\in A}{\cal U}_2.$$
 Then $\beta(s)$ is an $A$-tuple from $\{0,\infty\}$ in which the components  with value $\infty$ are exactly those corresponding to the labels of the root nodes of $s.$  Since $\alpha_{A,1}(s)$ is the set of labels of root nodes of $s,$ the two homomorphisms are equivalent.
  
 \medskip

 \noindent $({\it (d)}\Rightarrow {\it(a)})$. It suffices to show that for homomorphisms
 $$\alpha:A^{\Delta}\to (H_1,V_1), \beta: (A\times H_1)^{\Delta}\to (H_2,V_2),$$
 where $\alpha$ is $k$-definite and $\beta$ is 1-definite, that $\gamma=\alpha\otimes\beta$ is $(k+1)$-definite.
 That is, we will show that if $p$ is a context with a hole at depth $k+1$ and $s$ is a forest, then $\gamma(ps)$ is
 independent of $s.$  We can write $ps=u+aqs+v,$ where $u,v\in H_A,$ $a\in A,$ and $q$ is a context with a
 hole at depth $k.$  It is thus sufficient to show that $\gamma(aqs)$ is independent of $s.$  But we have
 \begin{eqnarray*}
 \gamma(aqs) &=& \gamma(a)\gamma(qs)\\
 &=&\gamma(a)(\alpha(qs),\beta((qs)^{\alpha}))\\
 &=& (\alpha(a)\cdot\alpha(qs),\beta(a,\alpha(qs))\cdot \beta((qs)^{\alpha})).
 \end{eqnarray*}
 Since $\alpha$ is $k$-definite, $\alpha(qs)$ depends only on $q.$  
 Since $\beta$ is 1-definite, the right-hand coordinate depends only on $\beta(a,\alpha(qs)),$
 which depends only on $a$ and $q.$  Thus the value is independent of $s,$ as required.
\end{proof}

\section{$(\mathsf{EF,EX})$-algebras}\label{sec.efex}
\subsection{The principal result}

\begin{defn}
An {\it $(\mathsf{EF,EX})$-homomorphism} $\alpha:A^{\Delta}\to (H,V)$ is one that factors through an iterated wreath product
$$\beta_1\otimes\cdots\otimes\beta_k,$$
where each $\beta_i$ either maps into ${\cal U}_1$ or is 1-definite.  By Theorem~\ref{thm.def} we can suppose that each 1-definite $\beta_i$ maps into ${\cal U}_2.$
\end{defn}

The principal result of this paper is an effective necessary and sufficient condition for a homomorphism to be a   $(\mathsf{EF,EX})$-homomorphism.  
\medskip\begin{defn}
Suppose $\alpha:A^{\Delta}\to (H,V)$. Let $s_1, s_2\in H_A,$ $k>0,$ and $\Gamma\subseteq H$ a reachability class for $(H,V).$  We say that $s_1, s_2$ are {\it $(\alpha,k,\Gamma)$-confused,} and write $s_1\equiv_{\alpha,k,\Gamma}s_2,$ if 
$$(s_1)^{\alpha_{\Gamma}}\sim_k (s_2)^{\alpha_{\Gamma}},\;\;\;\alpha(s_1),\alpha(s_2)\in \Gamma.$$
\end{defn}
Observe that the equivalence relation $\sim_k$  in the first item is over the extended alphabet $A\times{H_{\Gamma}}.$  It is worth emphasizing what $(s)^{\alpha_{\Gamma}}$ is when $\alpha(s)\in\Gamma$: We are tagging each node of $x$ of $s$ with the value $\alpha(t)\in H$ if the tree rooted at $x$ is $at$ and $\alpha(t)>\Gamma,$ but we are tagging the node by $\infty$--effectively leaving the node untagged--if $\alpha(t)\in\Gamma.$ Since $\alpha(s)\in\Gamma,$ every node is of one of these two types.

\medskip\begin{defn}
A homomorphism $\alpha$ is {\it nonconfusing} if and only if there exists $k>0$ such that $\equiv_{\alpha,k,\Gamma}$ is equality for reachability classes $\Gamma.$ 
\end{defn}

It follows from Proposition~\ref{prop.freedefinite} that $\equiv_{\alpha,k+1,\Gamma}$ refines $\equiv_{\alpha,k,\Gamma},$  so that if $\alpha$ is nonconfusing with associated parameter $k,$ then it is nonconfusing for all $m>k.$

Our main result is:
 \medskip\begin{thm}\label{thm.main} Let $\alpha:A^{\Delta}\to (H,V)$ be a homomorphism into a finite forest algebra.  $\alpha$ is a $(\mathsf{EF,EX})$ homomorphism if and only if it is nonconfusing.  \end{thm}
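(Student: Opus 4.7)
The plan is to prove both directions by induction, following the algebraic template already used for Theorems~\ref{thm.efwreath} and~\ref{thm.def}. The substantive direction is sufficiency, which proceeds by induction on $|H|$; necessity is handled by induction on the depth of the wreath product decomposition.

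For necessity, I would first verify that the single-factor cases are trivially nonconfusing: ${\cal U}_1$ has only singleton reachability classes, and ${\cal U}_2$ has a single reachability class (since $c_0\cdot\infty=0$), so in both cases the nonconfusing condition holds with $k=1$. Reading off the explicit action $(v,f)(h_1,h_2)=(vh_1,f(h_1)h_2)$, reachability in a wreath product is componentwise, so the reachability classes of $(H_1,V_1)\circ(H_2,V_2)$ are the products $\Gamma_1\times\Gamma_2$. Using this, one shows that if $\alpha$ is nonconfusing with parameter $k_\alpha$ and $\beta$ is either into ${\cal U}_1$ or 1-definite, then $\alpha\otimes\beta$ is nonconfusing with parameter $k_\alpha+1$: the first coordinate's nonconfusingness is supplied by $\alpha$ applied to the projected class, and one additional level of $\sim_k$-data suffices to read off the very simple behavior of the second factor.

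For sufficiency, I would induct on $|H|$. Using Lemma~\ref{lemma.subminimal}, and after checking that each direct factor $\alpha_{\geq\Gamma_j}$ inherits the nonconfusing property, reduce to the case of a unique subminimal class $\Gamma$. Since $|H_\Gamma|<|H|$ and $\alpha_\Gamma$ is again nonconfusing, the inductive hypothesis makes $\alpha_\Gamma$ an $(\mathsf{EF},\mathsf{EX})$-homomorphism. It remains to produce an $(\mathsf{EF},\mathsf{EX})$-homomorphism $\beta:(A\times H_\Gamma)^\Delta\to(H',V')$ such that $\alpha$ factors through $\alpha_\Gamma\otimes\beta$. What $\beta$ must carry is precisely what $\alpha_\Gamma$ loses: for $s$ with $\alpha_\Gamma(s)=\infty$, whether $\alpha(s)=\infty$ or $\alpha(s)\in\Gamma$, and in the latter case which element of $\Gamma$. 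The nonconfusing hypothesis with parameter $k$ gives that the identity within $\Gamma$ is determined by the $\sim_k$-class of $(s)^{\alpha_\Gamma}$, so set the ``within-$\Gamma$'' factor to $\mu=\alpha_{A\times H_\Gamma,k}$, which Theorem~\ref{thm.def} decomposes as an iterated wreath product of 1-definite ${\cal U}_2$-homomorphisms. Handle the distinction $\alpha(s)=\infty$ versus $\alpha(s)\in\Gamma$ with a ${\cal U}_1$-valued factor $\eta$ defined in the spirit of the proof of Theorem~\ref{thm.efwreath}, and set $\beta=\eta\times\mu$, which factors through a wreath product by the trivial embedding of direct products into wreath products discussed in Section~\ref{sec.forestalgebras}.

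The main obstacle will be the correctness of $\eta$. Theorem~\ref{thm.efwreath}'s construction used the $\mathsf{EF}$-identity $vh+h=vh$ to establish the witness property that $\alpha(s)=\infty$ iff some subtree of $s$ maps to $\infty$, and then defined $\eta$ letter-by-letter using this property. In the present setting only horizontal idempotence and commutativity are available, and Example~\ref{example.mainexample} already shows that a sum of non-$\infty$ trees can evaluate to $\infty$, so the literal witness argument fails. One must exploit the unique-subminimal-class reduction together with the nonconfusing hypothesis---perhaps by a local analysis of how $\alpha(s)$ evolves along root-to-hole paths in the context, or by extracting $\eta$ from the IH-given wreath product decomposition of $\alpha_\Gamma$---to define a valid EF-style detector for $\alpha(s)=\infty$. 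Making this rigorous, and verifying the factorization of $\alpha$ through $\alpha_\Gamma\otimes\beta$ by case analysis on $\alpha_\Gamma(s)$, is the technical heart of the proof.
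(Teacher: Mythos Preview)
Your sufficiency argument has a structural error. You propose $\beta=\eta\times\mu$ with $\eta:(A\times H_\Gamma)^\Delta\to{\cal U}_1$ detecting whether $\alpha(s)=\infty$; but a ${\cal U}_1$-valued homomorphism on this alphabet can only test whether some label from a fixed subset of $A\times H_\Gamma$ appears somewhere in $s^{\alpha_\Gamma}$. Even combined with $\mu=\alpha_{B,k}$ this is not enough. In Example~\ref{example.mainexample} take the unique subminimal class $\Gamma=\{b\}$, fix $k=1$, and compare $s=aa(a+b)+ba$ with $t=aab+ba$: one has $\alpha(s)=\infty$ and $\alpha(t)=b$, yet $s^{\alpha_\Gamma}\sim_1 t^{\alpha_\Gamma}$ (both have root-label set $\{(a,\infty),(b,a)\}$) and the two labeled forests contain exactly the same set of $A\times H_\Gamma$-labels, namely $\{(a,\infty),(b,a),(a,0),(b,0)\}$. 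So no choice of $\eta$, even together with $\mu$, separates them. The paper resolves this by placing the ${\cal U}_1$-factor \emph{after} the $k$-definite factor in the wreath product, so that at each node it can read the set of $\sim_k$-classes of the children; that is precisely the data $h_Q$ needed to see when a sum of elements of $\Gamma'$ collapses to $\infty$. You also omit the case $|\Gamma_{\min}|>1$, which the paper treats first and separately (there $\alpha_{\Gamma}\otimes\alpha_{B,k}$ already suffices and no ${\cal U}_1$-factor is required).

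For necessity you omit the hardest step. Your induction establishes that the wreath product homomorphism $\beta_1\otimes\cdots\otimes\beta_r$ is nonconfusing, but the hypothesis is only that $\alpha$ \emph{factors through} this product; concluding that $\alpha$ itself is nonconfusing requires closure of the nonconfusing property under quotients. This is Lemma~\ref{lemma.quotientclosure}, which the paper singles out as the most difficult of the three preservation lemmas, and it is not routine: a reachability class $\Gamma$ downstairs need not be the image of a reachability class upstairs, so one must invoke Lemma~\ref{lemma.morphisms} to lift $\Gamma$ to a minimal preimage class $\Lambda$, surgically modify the witnessing forests below depth $k$ so that their values land in $\Lambda$, and then prove by induction on $k$ that the $\sim_k$-hypothesis over $A\times H_\Gamma$ transfers to one over $A\times H_\Lambda$. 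Your sketch of the wreath-product inductive step (corresponding to Lemmas~\ref{lemma.wreathu1} and~\ref{lemma.wreath1def}) is on the right track, though the assertion that reachability classes of a wreath product are products $\Gamma_1\times\Gamma_2$ holds only in the full wreath product, not necessarily in the image of the homomorphism, and needs care.
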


The proof of Theorem~\ref{thm.main} will be given in the next two subsections.  
\begin{example}
Consider once again the algebra  of Examples~\ref{example.efexformula} and \ref{example.mainexample} and the associated homomorphism  $\alpha$ from $\{a,b\}^{\Delta}$.  Since the algebra has trivial reachability classes, $\alpha$ is nonconfusing for all $k$, so Theorem~\ref{thm.main} implies that $\alpha$ is  an  $(\mathsf{EF,EX})$-homomorphism.  We will see in the course of the proof of the main theorem how the wreath product decomposition is obtained.
\end{example}
\begin{example} Consider again the forest algebra
${\cal U}_2 = (\{0,\infty\},\{1,c_{\infty},c_0\}),$
and the homomorphism $\alpha$ from $\{a,b,c\}^{\Delta}$ onto ${\cal U}_2$ that maps $a$ to 1, $b$ to $c_0$ and $c$ to $c_{\infty}$.  There is a unique reachability class $\Gamma,$ so for any forest $s,$ $s^{\alpha_{\Gamma}}$ is identical to $s.$  Now observe that $a^kb\sim_k a^kc,$ but that these are mapped to different elements under $\alpha.$  So by our main theorem, $\alpha$ is not an  $(\mathsf{EF,EX})$-homomorphism.%
\end{example}

\subsection{Sufficiency of the condition}
We will use the ideal theory developed earlier to prove that every nonconfusing homomorphism factors through  a wreath product decomposition of the required kind.  The structure of our proof resembles the one given for Theorem~\ref{thm.efwreath} Once again, we proceed by induction on $|H|.$  The base of the induction  is the trivial case $|H|=1.$ Let us suppose that
$$\alpha:A^{\Delta}\to (H,V)$$
is nonconfusing with parameter $k,$  that $|H|>1,$ and that every nonconfusing homomorphism into a forest algebra with a smaller horizontal monoid factors through a wreath product of the required kind.  

Let $\Gamma=\Gamma_{\mathsf{min}}.$ Suppose first that $|\Gamma|>1.$ We claim that $\alpha$ factors through
$$\beta=\alpha_{\Gamma}\otimes\alpha_{B,k}: A^{\Delta}\to (H_{\Gamma},V_{\Gamma})\circ B^{\Delta}/{\sim}_k$$
where $B=A\times H_{\Gamma}.$  Since $|H_{\Gamma}|<|H|$ and $\alpha_{\Gamma}$ is also nonconfusing, the induction hypothesis gives the desired decomposition of $\alpha.$ To establish the claim, let $s\in H_A.$  Then
$$\beta(s)=(\alpha_{\Gamma}(s),[s^{\alpha_{\Gamma}}]_{\sim_k}).$$
If $s\notin\Gamma,$ then the value of the left-hand coordinate determines $\alpha(s).$  If $s\in\Gamma,$ then by the nonconfusion condition, the value of the right-hand coordinate determines $\alpha(s).$  Thus $\alpha$ factors through $\beta$ as required. 

So let 
$|\Gamma|=1.$  Then $\Gamma=\{\infty\}$ and  $(H_{\Gamma},V_{\Gamma})=(H,V).$  Lemma~\ref{lemma.subminimal} implies that we can suppose $(H,V)$ has a single subminimal reachability class, because each of the component homomorphisms in the direct product is nonconfusing, and the direct product factors through the wreath product.

Thus we have a unique minimal element $\infty,$ and a unique subminimal ideal $\Gamma'.$  We claim that $\alpha$ factors through
$$\beta=\alpha_1\otimes\alpha_2\otimes\alpha_3:A^{\Delta}\to(H_{\Gamma'},V_{\Gamma'})\circ B^{\Delta}/{\sim}_k\circ\;{\cal U}_1,$$
where 
\begin{itemize}
\item[] $\alpha_1=\alpha_{\Gamma'}.$
\item[] $\alpha_2 = \alpha_{B,k},$ where $B=A\times H_{\Gamma'}.$
\item[] $\alpha_3:(B\times 2^{B})^{\Delta}\to {\cal U}_1$ will be defined below.
\end{itemize}
To see how $\alpha_3$ should be defined, let us consider what this homomorphism needs to tell us.  If $\alpha(s)>\Gamma',$ then the first coordinate of $\beta(s)$ determines $\alpha(s).$  If $\alpha(s)\in\Gamma',$ then the first two components of $\beta(s)$ determine $\alpha(s),$ by nonconfusion.  So we will use the third component to distinguish between $\alpha(s)\in\Gamma'$ and $\alpha(s)=\infty.$  The value of the first component already determines whether or not $\alpha(s)\in\Gamma'\cup\{\infty\},$ so we really just need to be able to tell when $\alpha(s)=\infty.$  There are several cases to consider, depending on whether or not $s$ contains a tree $t$ such that $\alpha(t)=\infty.$  If not, then $s=t_1+\cdots+t_r,$ where $\alpha(t_i)\geq \Gamma'$ for all $i.$ Observe that if this is the case, then the set of values $\{\alpha(t_1),\cdots,\alpha(t_r)\}$ is determined by the second component $\{[t_1^{\alpha_1}]_{\sim_k},\ldots,[t_r^{\alpha_1}]_{\sim_k}\}$ of $\beta(s).$  If $s$ contains a tree $t$ such that $\alpha(t)=\infty,$ pick such a tree at maximal depth.  Then $t= a(t_1+\cdots+t_r),$ where once again $\alpha(t_i)\geq\Gamma'$ for all $i,$ and the set of values $\{\alpha(t_1),\cdots,\alpha(t_r)\}$ is determined by the second component of $\beta(s).$
We now specify the value of $\alpha_3(a,h,Q).$ As remarked above, $Q$ determines  a set of values all in $\Gamma'$ or strictly higher.  Let $h_Q\in H_A$ be the sum of these values.  If either $h_Q=\infty,$ or $ah_Q=\infty,$ set $\alpha_3(a,h,Q)=0.$  Otherwise, $\alpha_3(a,h,q)=1.$  

The third component of $\beta(s)$  will be $\infty$ if and only if there is some subtree $a(t_1+\cdots + t_r)$ such that
$$\alpha_3(a,\alpha_1(t_1+\cdots+t_r),\{[t_1^{\alpha_1}]_{\sim_k},\ldots,[t_r^{\alpha_1}]_{\sim_k}\})=0.$$
If we pick the subtree of maximal depth at which this occurs, then as argued above, $\alpha(s)=\infty.$  The only other way we can have $\alpha(s)=\infty$ is if there is no such subtree, but $s=t_1+\cdots +t_r$ where each $\alpha(t_i)\geq\Gamma'$ and the sum of these values is $\infty.$  In this case, the fact that no such subtree exists is determined by the third coordinate of $\beta(s)$ being 1, and the set of $\alpha(t_i)\geq\Gamma'$ is determined by the second coordinate of $\beta(s).$  So in all cases $\beta(s)$ determines $\alpha(s).$

\subsection{Necessity of the condition.}
To prove the converse, we have to show preservation of nonconfusion under quotients and wreath products with the allowable factors.  This is carried out in the following three lemmas.  Preservation under quotients (Lemma~\ref{lemma.quotientclosure}) is the most difficult of  the three to show.

\medskip\begin{lemma}\label{lemma.quotientclosure}
Let 
$$\alpha:A^{\Delta}\to (H_1,V_1), \beta:A^{\Delta}\to (H_2,V_2),$$
be homomorphisms onto finite forest algebras such that $\beta$ factors through $\alpha$.  If $\alpha$ is nonconfusing then so is $\beta.$

\end{lemma}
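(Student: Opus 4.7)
The plan is to prove that $\beta$ is nonconfusing with the same parameter $k$ as $\alpha$, by reducing the nonconfusion condition for $\beta$ at each reachability class $\Lambda$ of $H_2$ to the nonconfusion condition for $\alpha$ at a canonical lift $\Gamma_*$ in $H_1$. Write $\beta=\rho\circ\alpha$ for the induced surjective homomorphism $\rho\colon (H_1,V_1)\to (H_2,V_2)$. Fix a reachability class $\Lambda$ of $H_2$ and invoke Lemma~\ref{lemma.morphisms} (applied to $\rho$, using that $H_2$ is horizontally idempotent and commutative) to obtain the unique minimal reachability class $\Gamma_*$ of $H_1$ with $\rho(\Gamma_*)\subseteq\Lambda$; in fact $\rho(\Gamma_*)=\Lambda$, so $\rho$ restricts to a surjection from $\Gamma_*$ onto $\Lambda$.

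Given $s_1,s_2\in H_A$ with $\beta(s_i)\in\Lambda$ and $s_1^{\beta_\Lambda}\sim_k s_2^{\beta_\Lambda}$, the goal is $\beta(s_1)=\beta(s_2)$. The strategy is to promote the $s_i$ to witnesses $\tilde s_i$ satisfying $\alpha(\tilde s_i)\in\Gamma_*$, $\beta(\tilde s_i)=\beta(s_i)$, and $\tilde s_1^{\alpha_{\Gamma_*}}\sim_k\tilde s_2^{\alpha_{\Gamma_*}}$; then nonconfusion of $\alpha$ at $\Gamma_*$ delivers $\alpha(\tilde s_1)=\alpha(\tilde s_2)$, whence $\beta(s_1)=\rho(\alpha(\tilde s_1))=\rho(\alpha(\tilde s_2))=\beta(s_2)$. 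We choose $g_i\in\Gamma_*$ with $\rho(g_i)=\beta(s_i)$ and forests $r_i\in H_A$ with $\alpha(r_i)=g_i$, and set $\tilde s_i:=s_i+r_i$. Horizontal idempotence in $H_2$ gives $\beta(\tilde s_i)=\beta(s_i)+\rho(g_i)=\beta(s_i)+\beta(s_i)=\beta(s_i)$, and the reachability class of $\alpha(\tilde s_i)$ is at most $\Gamma_*$ (using $\alpha(\tilde s_i)\leq g_i$) and maps under $\rho$ into $\Lambda$, so by the minimality of $\Gamma_*$ among classes of $H_1$ mapping into $\Lambda$ this class is exactly $\Gamma_*$.

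The principal obstacle is arranging the third condition $\tilde s_1^{\alpha_{\Gamma_*}}\sim_k\tilde s_2^{\alpha_{\Gamma_*}}$. The trouble is that the $\alpha_{\Gamma_*}$-tagging records the exact $H_1$-value of every subtree whose $\alpha$-class lies strictly above $\Gamma_*$, whereas the $\beta_\Lambda$-tagging records only the image in $H_2/I_\Lambda$; it can hide subtrees that the $\alpha_{\Gamma_*}$-tagging sees (those whose $\alpha$-class lies above $\Gamma_*$ but whose $\rho$-image still lies in $\Lambda$) and reveal subtrees the $\alpha_{\Gamma_*}$-tagging does not (those whose $\alpha$-class is incomparable to $\Gamma_*$ but whose $\rho$-image strictly exceeds $\Lambda$). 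Consequently the given $\beta_\Lambda$-equivalence does not lift mechanically. The remedy we anticipate is a refined choice of the witnesses $r_i$, tailored to the set of classes of $H_1$ lying in $\rho^{-1}(\Lambda)$: by letting $r_i$ be a symmetric combination of canonical forests indexed by all these classes, horizontal idempotence and commutativity in $H_1$ should absorb the extra distinctions visible to $\alpha_{\Gamma_*}$ but not to $\beta_\Lambda$, while simultaneously compensating for the structure exposed by $\beta_\Lambda$ that $\alpha_{\Gamma_*}$ hides. This is the technical heart of the argument and the reason the lemma is acknowledged as the most difficult of the three preservation lemmas.
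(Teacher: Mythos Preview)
Your setup is sound: lifting the class $\Lambda\subseteq H_2$ to the unique minimal $\Gamma_*\subseteq H_1$ via Lemma~\ref{lemma.morphisms}, and reducing to nonconfusion of $\alpha$ at $\Gamma_*$, is exactly the right framework. The construction $\tilde s_i=s_i+r_i$ also correctly forces $\alpha(\tilde s_i)\in\Gamma_*$ while preserving $\beta(\tilde s_i)=\beta(s_i)$.

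However, the proposed remedy for the ``principal obstacle'' cannot succeed. The discrepancy between the $\alpha_{\Gamma_*}$-tagging and the $\beta_\Lambda$-tagging occurs at \emph{every depth} of $s_i$: whenever a node $x$ has subtree $at$ with $\alpha(t)>\Gamma_*$, the $\alpha_{\Gamma_*}$-tag at $x$ is the exact value $\alpha(t)$, whereas the $\beta_\Lambda$-tag is only $\rho(\alpha(t))$ (or $\infty$, if $\rho(\alpha(t))\in\Lambda$). Two such nodes in $s_1$ and $s_2$ can agree in the $\beta_\Lambda$-tagging while disagreeing in the $\alpha_{\Gamma_*}$-tagging, and this can happen at depth $2,3,\ldots,k-1$. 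Adding any forest $r_i$ at the root, however cleverly chosen, only contributes new root-level components to $\tilde s_i^{\alpha_{\Gamma_*}}$; it cannot alter the $\alpha_{\Gamma_*}$-labels on nodes already present at positive depth inside $s_i$. Horizontal idempotence in $H_1$ governs values in $H_1$, not the combinatorics of the $\sim_k$-equivalence on $(A\times H_{\Gamma_*})^\Delta$, which is sensitive to the full labelled tree structure down to depth $k$. So no choice of $r_i$ can force $\tilde s_1^{\alpha_{\Gamma_*}}\sim_k\tilde s_2^{\alpha_{\Gamma_*}}$ in general.

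The paper's proof addresses exactly this by performing surgery \emph{at depth $k-1$} rather than at the root: one fixes, for each $h\in H_2$, a canonical forest $s_h$ with $\beta(s_h)=h$ (and $\alpha(s_h)\in\Gamma_*$ whenever $h\in\Lambda$), and replaces every subforest hanging below a depth-$(k-1)$ node by the canonical $s_{\beta(t)}$. This makes the $\alpha_{\Gamma_*}$-tag at every node of depth $<k$ a function of the $\beta_\Lambda$-tag, which is what lets the $\sim_k$-equivalence transfer. The induction on $k$ then handles the upper levels. Your root-level addition is not a substitute for this depth-$(k-1)$ normalisation.
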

\begin{proof}
Suppose that $\alpha$ is nonconfusing with parameter $k.$  We will show that $\beta$ is nonconfusing with the same parameter.  To this end, let $\Gamma\subset H_2$ be a reachability class, and let $s_1, s_2\in H_A$ be forests with 
$$s_1^{\beta_{\Gamma}}\sim_k s_2^{\beta_{\Gamma}}
\mbox{ and }\beta(s_1), \beta(s_2)\in\Gamma.$$
We must show $\beta(s_1)=\beta(s_2).$

Since $\beta$ factors through $\alpha,$ there is an onto forest algebra homomorphism $\eta:(H_1,V_1)\to (H_2,V_2)$ such that $\beta=\eta\alpha.$
Choose $h_2\in\Gamma,$ and let $h_1\in H_1$ be such that $\eta(h_1)=h_2,$ and $h_1$ is $\leq$-minimal for this property.  Let $\Lambda$ be the reachability class of $h_1.$  By Lemma~\ref{lemma.morphisms}, $\eta(\Lambda)=\Gamma.$

  We now perform a little surgery on the forests $s_1$ and $s_2$: For each $h\in H_2,$ we choose $s_h\in H_A$ such that $\beta(s_h)=h,$ and such that if $h\in \Gamma,$ then $\alpha(s_h)\in\Lambda.$  (This is where we use the fact that $\eta(\Lambda)=\Gamma.$) Look at a nodes of $s_1$ or $s_2$ at depth $k-1.$  If the tree rooted at such a node is $at,$ where $t\in H_A,$ we replace the forest $t$ by $s_{\beta(t)}.$  We do this for every node at depth $k-1$ of the two forests and obtain new forests $\bar s_1$ and $\bar s_2.$ Obviously we have not changed the values of these forests under $\beta,$ so we have
$\beta(\bar s_i)=\beta(s_i)$ for $i=1,2.$  Thus it is sufficient to show $\beta(\bar s_1)=\beta(\bar s_2).$

We claim that 
$$(\bar s_1)^{\alpha_{\Lambda}}\sim_k (\bar s_2)^{\alpha_{\Lambda}}
\mbox{ and }
\alpha_{\Lambda}(\bar s_1)=\alpha_{\Lambda}(\bar s_2).$$
This claim gives our desired result.  To see this, note that the second equality in the claim implies $\alpha(\bar s_1)>\Lambda$ if and only if $\alpha(\bar s_2)>\Lambda.$ If $\alpha(\bar s_1),\alpha(\bar s_2)>\Lambda,$  we  have
$$\alpha(\bar s_1)=\alpha_{\Lambda}(\bar s_1)=\alpha_{\Lambda}(\bar s_2)=\alpha(\bar s_2).$$
On the other hand, if $\alpha(\bar s_1),\alpha(\bar s_2)\not >\Lambda,$ we must have $\alpha(\bar s_1),\alpha(\bar s_2)\in\Lambda,$ because $\eta\alpha(\bar s_i)=\beta(\bar s_i)\in\Gamma.$ Then the nonconfusing property of $\alpha$ gives $\alpha(\bar s_1)=\alpha(\bar s_2).$  So in all cases we have $\alpha(\bar s_1)=\alpha(\bar s_2).$  Applying $\eta$ gives $\beta(\bar s_1)=\beta(\bar s_2),$ as required.

We prove the claim by induction on $k.$  More precisely, for each $k\geq 1,$ we will show that if
$s^{\beta_{\Gamma}}\sim_k t^{\beta_{\Gamma}},$
then
$(\bar s)^{\alpha_{\Lambda}}\sim_k (\bar t)^{\alpha_{\Lambda}},$
and
$\alpha_{\Lambda}(\bar s)=\alpha_{\Lambda}(\bar t).$
First suppose $k=1,$ so $s^{\beta_{\Gamma}}\sim_1 t^{\beta_{\Gamma}}.$ A root node of $(\bar s)^{\alpha_{\Lambda}}$ has the form $(a,\alpha_{\Lambda}(s_h)),$ for some $h\in H_2.$  This means that $s$ has a component tree of the form $au,$ where $\beta(u)=h.$ Thus $s^{\beta_{\Gamma}}$ has a root node labeled $(a,h)$ or $(a,\infty),$ depending on whether $h\in\Gamma.$ Since $s^{\beta_{\Gamma}}\sim_1 t^{\beta_{\Gamma}},$ the same root node occurs in $t^{\beta_{\Gamma}}.$ If $h\notin\Gamma,$ then  $t$ contains a component $av$ with $\beta(v)=h,$ and thus $\bar t$ has a component $as_h,$ so that $(\bar t)^{\alpha_{\Lambda}}$ has a root node $(a,\alpha_{\Lambda}(s_h)).$  If $h\in\Gamma,$ then $t$ contains a component $av$ with $\beta(v)=h'\in\Gamma,$ so that $(\bar t)^{\alpha_{\Lambda}}$ contains a root node
$$(a,\alpha_{\Lambda}(s_{h'}))=(a,\infty)=(a,\alpha_{\Lambda}(s_h)),$$
so that every root node of $(\bar s)^{\alpha_{\Lambda}}$ is also a root node of $(\bar t)^{\alpha_{\Lambda}}.$ We get the converse inclusion by symmetry.  So $(\bar s)^{\alpha_{\Lambda}}\sim_1 (\bar t)^{\alpha_{\Lambda}}.$  Moreover,  we also have $\alpha_{\Lambda}(\bar s)=\alpha_{\Lambda}(\bar t).$ This is because if no root node of $(\bar s)^{\alpha_{\Lambda}}$ has the form $(a,\infty),$ then $\alpha_{\Lambda}(\bar s)$ is determined by the sum of the $\alpha(as_h),$ and $\alpha_{\Lambda}(\bar t)$ is determined by the sum of the same set of terms.  If some root node of $(\bar s)^{\alpha_{\Lambda}}$ is $(a,\infty),$ then the same is true for some root node of $(\bar t)^{\alpha_{\Lambda}},$ and we have $\alpha_{\Lambda}(\bar s)=\infty =\alpha_{\Lambda}(\bar t).$

Our induction hypothesis is now that $k\geq 1,$ and that whenever $s^{\beta_{\Gamma}}\sim_k t^{\beta_{\Gamma}},$ we have both $(\bar s)^{\alpha_{\Lambda}}\sim_k (\bar t)^{\alpha_{\Lambda}},$ and $\alpha_{\Lambda}(\bar s)=\alpha_{\Lambda}(\bar t).$  We show that these properties are  preserved at level $k+1.$ 
If we write
$$s=a_1s_1+\cdots +a_ks_k, t=b_1t_1+\cdots + b_pt_p,$$
where the $s_i, t_j$ belong to $H_A,$ and the $a_i,b_j$ to $A,$ then we have
$$\bar s=a_1\bar s_1+\cdots +a_k\bar s_k, t=b_1\bar t_1+\cdots + b_p\bar t_p.$$
It is important to understand precisely what the operator $u\mapsto \bar u$ means in these equations:  On the left-hand sides we are performing the substitution at nodes of $\bar s,\bar t$ at level $k$; on the right-hand sides we carry out the operation at nodes of level $k-1.$  The $\sim_{k+1}$-class of $(\bar s)^{\alpha_{\Lambda}}$ is determined by the set $T^{k+1}_{{\bar s}^{\alpha_{\Lambda}}}$ of pairs of the form
$((a_i,\alpha_{\Lambda}(\bar s_i)),[(\bar s_i)^{\alpha_{\Lambda}}]_{\sim_k}).$
Let $1\leq i\leq r.$  The corresponding set $T^{k+1}_{s^{\beta_{\Gamma}}}$ for the $\sim_{k+1}$-class of $s^{\beta_{\Gamma}}$ contains the pair
$((a_i,\beta_{\Gamma}(s_i)),[s_i^{\beta_{\Gamma}}]_{\sim_k}).$
Thus there is some $j$ such that $a_i=b_j$ and $s_i^{\beta_{\Gamma}}\sim_k t_j^{\beta_{\Gamma}}.$  By the induction hypothesis, we have both $\bar s_i^{\alpha_{\Lambda}}\sim_k \bar t_j^{\alpha_{\Lambda}}$
and $\alpha_{\Lambda}(\bar s_i)=\alpha_{\Lambda}(\bar t_j).$  Thus the pair
$((a_i,\alpha_{\Lambda}(\bar s_i)),[(\bar s_i)^{\alpha_{\Lambda}}]_{\sim_k}),$
also occurs in $T^{k+1}_{{\bar t}^{\alpha_{\Lambda}}}$ This shows $T^{k+1}_{{\bar s}^{\alpha_{\Lambda}}}\subseteq T^{k+1}_{{\bar t}^{\alpha_{\Lambda}}}.$  We get the converse inclusion by symmetry. So
$(\bar s)^{\alpha_{\Lambda}}\sim_{k+1} (\bar t)^{\alpha_{\Lambda}}.$  We obtain $\alpha_{\Lambda}(\bar s)=\alpha_{\Lambda}(\bar t)$ just as we did in the case $k=1.$
\end{proof}

\medskip\begin{lemma}\label{lemma.wreathu1}
Suppose that 
$$\alpha:A^{\Delta}\to  (H,V)\circ {\cal U}_1$$
is a homomorphism, and that $\beta=\pi\alpha,$ where $\pi$ is the projection morphism onto $(H,V),$ is nonconfusing.  Then $\alpha$ is nonconfusing.
\end{lemma}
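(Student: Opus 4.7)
The plan is to show that $\alpha$ inherits nonconfusion from $\beta$ with the same parameter $k$. Write $\widetilde{H} = H \times \{0, \infty\}$ for the horizontal monoid of $(H,V) \circ \mathcal{U}_1$. First I analyze the reachability structure of $\widetilde{H}$: because the vertical action of $\mathcal{U}_1$ satisfies $v \cdot \infty = \infty$ for every $v$, no element of the form $(h, 0)$ is reachable from any element of the form $(h', \infty)$. Consequently every reachability class $\Gamma$ of $\widetilde{H}$ consists of elements sharing a common second coordinate $x \in \{0, \infty\}$, and by Lemma~\ref{lemma.morphisms} its projection is contained in a single reachability class $\Gamma'$ of $(H,V)$, so $\Gamma \subseteq \Gamma' \times \{x\}$.

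Now fix such a $\Gamma$ and suppose $s_1, s_2 \in H_A$ satisfy $\alpha(s_1), \alpha(s_2) \in \Gamma$ and $s_1^{\alpha_\Gamma} \sim_k s_2^{\alpha_\Gamma}$. The goal $\alpha(s_1) = \alpha(s_2)$ reduces to $\beta(s_1) = \beta(s_2)$, because the second coordinates of both values are forced to equal $x$ by the previous step. To invoke the nonconfusion hypothesis for $\beta$ at the class $\Gamma'$, I will produce the relation $s_1^{\beta_{\Gamma'}} \sim_k s_2^{\beta_{\Gamma'}}$. For this, define a letter-renaming map $\phi$ from the extended alphabet $A \times (\widetilde{H}/I_\Gamma)$ to $A \times (H/I_{\Gamma'})$ by $(a, \infty) \mapsto (a, \infty)$ and $(a, h) \mapsto (a, \pi(h))$ if $\pi(h) \notin I_{\Gamma'}$, otherwise $(a, \infty)$. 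Extending $\phi$ to the corresponding free forest algebras, the critical identity to verify is $\phi(s^{\alpha_\Gamma}) = s^{\beta_{\Gamma'}}$ whenever $\alpha(s) \in \Gamma$. Since $\sim_k$-equivalence is preserved by any letter-renaming homomorphism (its invariants depend only on the top $k$ levels up to i.c.-equivalence, by Lemma~\ref{lemma.icdef}), applying $\phi$ to both sides of $s_1^{\alpha_\Gamma} \sim_k s_2^{\alpha_\Gamma}$ yields $s_1^{\beta_{\Gamma'}} \sim_k s_2^{\beta_{\Gamma'}}$. Combined with $\beta(s_1), \beta(s_2) \in \Gamma'$, nonconfusion of $\beta$ gives $\beta(s_1) = \beta(s_2)$, and hence $\alpha(s_1) = (\beta(s_1), x) = (\beta(s_2), x) = \alpha(s_2)$.

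The main obstacle is verifying $\phi(s^{\alpha_\Gamma}) = s^{\beta_{\Gamma'}}$, i.e.\ that the two ideal-quotient operations are compatible under the projection $\pi$. The key observation making this work is that whenever $\alpha(s) \in \Gamma$, every subtree $at$ of $s$ satisfies $\alpha(t) \geq \Gamma$, so the tag $\alpha_\Gamma(t)$ falls into exactly two cases: either $\alpha(t) \in \Gamma$, in which case both $\alpha_\Gamma(t)$ and $\beta_{\Gamma'}(t)$ are the collapsed value $\infty$ (since $\beta(t) = \pi(\alpha(t)) \in \Gamma'$), or $\alpha(t) > \Gamma$, in which case $\alpha_\Gamma(t) = \alpha(t)$ and $\pi(\alpha(t)) = \beta(t)$, with the possible further collapse to $\infty$ on the $\beta$-side (when $\beta(t) \in \Gamma'$) being precisely what the second clause of $\phi$ implements. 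This case analysis is straightforward but is the computational heart of the argument, since it is what connects the action of taking the $I_\Gamma$-quotient upstairs to the action of taking the $I_{\Gamma'}$-quotient downstairs.
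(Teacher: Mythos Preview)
Your proposal is correct and follows essentially the same approach as the paper: both argue that the second coordinate is constant on each reachability class of $\widetilde{H}$, identify the image class $\Gamma'$ in $H$ via the projection, and then show that the $\alpha_\Gamma$-tagging of $s$ determines the $\beta_{\Gamma'}$-tagging through a fixed letter-renaming, so that $\sim_k$-equivalence transfers and nonconfusion of $\beta$ applies. The only difference is cosmetic---you name the relabeling map $\phi$ explicitly and invoke Lemma~\ref{lemma.icdef}, whereas the paper describes the map node-by-node without naming it.
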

\begin{proof}
Let $k$ be the nonconfusion parameter for $\beta.$
Let $\Delta\subseteq H\times\{0,\infty\}$ be a reachability class in the image of $\alpha.$ Suppose $s,t\in H_A$ with $(s)^{\alpha_{\Delta}}\sim_k (t)^{\alpha_{\Delta}}$ and $\alpha(s),\alpha(t)\in\Delta.$  We must show $\alpha(s)=\alpha(t).$

We can never reach an element of the form $(h,0)$ from one of the form $(h',\infty),$ since
$$(v,f)(h',\infty)=(vh,f(h)\cdot\infty)=(vh,\infty).$$
So, since $\alpha(s)\cong\alpha(t),$ they must agree in the right-hand coordinate.  It remains to show that the left coordinates $\beta(s),$ $\beta(t)$ are equal.

As we argued in the proof of Lemma~\ref{lemma.quotientclosure}, $\pi(\Delta)$ is contained in a reachability class $\Gamma$ of $H.$  Let us look at the corresponding nodes of $(s)^{\alpha_{\Delta}},$ $(t)^{\alpha_{\Delta}}$ and of  $(s)^{\beta_{\Gamma}}$, $(t)^{\beta_{\Gamma}}$.  If a node of $(s)^{\alpha_{\Delta}}$ has a label of the form $(a,h),$ where $h>\Delta,$ then the corresponding node of $(s)^{\beta_{\Gamma}}$ will either be labeled $(a,\pi(h))$ or $(a,\infty),$ and this is entirely determined by the value of $h.$  If, on the other hand, a node of $(s)^{\alpha_{\Delta}}$has the label $(a,\infty),$ then $h\not >\Delta.$ But since $\alpha(s)\in\Delta,$ this implies $h\in\Delta,$ so that $\pi(h)\in\Gamma,$ and thus the node also has the label $(a,\infty)$ in $(s)^{\beta_{\Gamma}}.$  Thus the labels of nodes of $(s)^{\beta_{\Gamma}}$ and $(t)^{\beta_{\Gamma}},$ are determined by applying a mapping
$H_{\Delta}\to H_{\Gamma}$ to the right coordinates of the node labels of $(s)^{\alpha_{\Delta}},$ $(t)^{\alpha_{\Delta}}$  As a consequence,
$(s)^{\beta_{\Gamma}}\sim_k (t)^{\beta_{\Gamma}}.$  Also, for $i=1,2,$ $\beta(s)=\pi\alpha(s)\in\Gamma,$ and likewise $\beta(t)\in\Gamma,$ so nonconfusion gives $\beta(s)=\beta(t),$ as required.
\end{proof}
\medskip\begin{lemma}\label{lemma.wreath1def}
Suppose that 
$$\alpha=\beta\otimes\gamma:A^{\Delta}\to  (H,V)\circ (H',V')$$
is a homomorphism, that $\beta$ is nonconfusing, and that $\gamma:(A\times H)^{\Delta}\to (H',V')$ is 1-definite.  Then $\alpha$ is nonconfusing.
\end{lemma}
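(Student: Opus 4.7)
My plan is to prove that if $\beta$ is nonconfusing with parameter $k_0$, then $\alpha=\beta\otimes\gamma$ is nonconfusing with parameter $k_0+1$. Fix a reachability class $\Delta$ of the image of $\alpha$ and, using Lemma~\ref{lemma.morphisms} applied to the projection $\pi$ onto the first factor, let $\Gamma$ be the reachability class of $H$ with $\pi(\Delta)\subseteq\Gamma$. Suppose $s^{\alpha_{\Delta}}\sim_{k_0+1} t^{\alpha_{\Delta}}$ with $\alpha(s),\alpha(t)\in\Delta$. I need to show $\alpha(s)=\alpha(t)$, which amounts to checking $\beta(s)=\beta(t)$ in the first coordinate and $\gamma(s^{\beta})=\gamma(t^{\beta})$ in the second.

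The key technical device is a label projection $\tilde\pi: A\times (H\times H')_{\Delta}\to A\times H_{\Gamma}$, sending $(a,(h,h'))$ to $(a,h)$ when $h>\Gamma$ and to $(a,\infty)$ otherwise, and fixing $(a,\infty)$. Since $\pi$ is a forest algebra morphism, the reachability preorder behaves predictably: whenever $(h,h')>\Delta$ one has $h\geq\Gamma$, and whenever $(h,h')\not>\Delta$ one has $h\not>\Gamma$. A direct node-by-node check then shows that applying $\tilde\pi$ as a relabeling to $u^{\alpha_{\Delta}}$ produces exactly $u^{\beta_{\Gamma}}$ for every $u\in H_A$. Since $\sim_k$ is preserved by any relabeling, the hypothesis yields $s^{\beta_{\Gamma}}\sim_{k_0+1} t^{\beta_{\Gamma}}$; combined with $\beta(s),\beta(t)\in\pi(\Delta)\subseteq\Gamma$ and nonconfusion of $\beta$ at parameter $k_0$, this gives $\beta(s)=\beta(t)$.

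The main obstacle is the second coordinate. Since $\gamma$ is 1-definite, the task reduces to showing that $s^{\beta}$ and $t^{\beta}$ carry the same set of root labels; a component $aw$ of $s$ contributes the label $(a,\beta(w))$ to $s^{\beta}$. From $s^{\alpha_{\Delta}}\sim_{k_0+1} t^{\alpha_{\Delta}}$, every component $aw$ of $s$ is paired with a component $aw'$ of $t$ satisfying $\alpha_{\Delta}(w)=\alpha_{\Delta}(w')$ and $w^{\alpha_{\Delta}}\sim_{k_0} (w')^{\alpha_{\Delta}}$. If $\alpha_{\Delta}(w)\neq\infty$, then $\alpha(w)=\alpha(w')$ can be read off directly and hence $\beta(w)=\beta(w')$. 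Otherwise $\alpha(w)\not>\Delta$; writing $s=aw+r$, we have $\alpha(s)=\alpha(aw)+\alpha(r)\leq\alpha(aw)\leq\alpha(w)$, so $\alpha(w)\geq\Delta$ in reachability and therefore $\alpha(w)\in\Delta$, and symmetrically $\alpha(w')\in\Delta$. Applying $\tilde\pi$ to the equivalence $w^{\alpha_{\Delta}}\sim_{k_0} (w')^{\alpha_{\Delta}}$ gives $w^{\beta_{\Gamma}}\sim_{k_0} (w')^{\beta_{\Gamma}}$, and since $\beta(w),\beta(w')\in\Gamma$, nonconfusion of $\beta$ at parameter $k_0$ delivers $\beta(w)=\beta(w')$. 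Invoking nonconfusion of $\beta$ on the child subforests in this way is exactly what forces the upgrade from parameter $k_0$ to $k_0+1$: we need one extra level of $\sim$-equivalence in the hypothesis so that the $\sim_{k_0}$-class of each child subtree is accessible.
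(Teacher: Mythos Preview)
Your proof is correct and takes essentially the same approach as the paper's: project the reachability class $\Delta$ to a class $\Gamma$ in $H$, push the $\sim_{k+1}$ hypothesis through a label projection to obtain the corresponding equivalence for the $\beta$-tagged forests, and then combine nonconfusion of $\beta$ on the child subforests with 1-definiteness of $\gamma$ to match the components. One small caveat: the identity $\tilde\pi(u^{\alpha_\Delta})=u^{\beta_\Gamma}$ is only guaranteed for forests $u$ with $\alpha(u)\geq\Delta$ (which covers $s$, $t$, $w$, $w'$ and is all you use), not literally for every $u\in H_A$ as you state.
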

\begin{proof} Let us suppose in particular that $\beta$ is nonconfusing with parameter $k.$ We claim that $\alpha$ is nonconfusing with parameter $k+1.$  Let $\Gamma\subseteq H\times H'$ be a reachability class in the image of $\alpha,$ and let $s,t\in H_A$ with $(s)^{\alpha_{\Gamma}}\sim_{k+1} (t)^{\alpha_{\Gamma}}$ and $\alpha(s),\alpha(t)\in\Gamma.$  We will show $\alpha(s)=\alpha(t).$  We begin by proving that $\beta(s)=\beta(t),$ using the nonconfusing property of $\beta,$ and then use 1-definiteness to show that the right-hand coordinates are also equal.

We write both $s$ and $t$ as sums of the component trees:
$$s=a_1s_1+\cdots+a_qs_q,\;\;\;t=b_1t_1+\cdots+b_rt_r.$$
Since $(s)^{\alpha_{\Gamma}}\sim_k (t)^{\alpha_{\Gamma}},$ the two sets of pairs
$$\{(a_j,[(s_j)^{\alpha_{\Gamma}}]_{\sim_k}):1\leq j\leq q\}, \{(b_j,[(t_j)^{\alpha_{\Gamma}}]_{\sim_k}):1\leq j\leq r\}$$
are equal.  As we argued in the previous lemma, if
$$(s_j)^{\alpha_{\Gamma}}\sim_k (t_{j'})^{\alpha_{\Gamma}},$$
then nonconfusion for $\beta$ makes their values under $\beta$ equal.  Let us denote this common value by $h.$  We then have, for $a\in A,$
$$\alpha(as_j)=\alpha(a)(h,h')=(\beta(a)h,\gamma(a,h)h').$$
We similarly have
$$\alpha(at_{j'})=\alpha(a)(h,h'')=(\beta(a)h,\gamma(a,h)h''),$$
and the two values are the same by 1-definiteness of $\gamma.$  Thus
$$\{\alpha(a_js_j):1\leq j\leq q\}=\{\alpha(b_jt_j):1\leq j\leq r\}.$$
 We  get $\alpha(s_1)=\alpha(s_2)$ by summing over these two sets and using idempotence and commutativity of addition.
\end{proof}

\subsection{Decidability}\label{section.decidability}
In this section we show that we can effectively determine if a given forest algebra homomorphism $\alpha:A^{\Delta}\to (H,V)$ is nonconfusing. Our method is a variant of the one given by Bojanczyk and Walukiewicz for the binary tree case.~\cite{BW1}

We suppose that $\alpha$ is onto, and that $\Gamma\subseteq H$ is a reachability class.  We first describe an algorithm that constructs a finite sequence of subsets 
$B_0,B_1,B_2,\ldots$ of $\Gamma\times\Gamma.$ 

\begin{itemize}
\item Set 
$$B_0=\{(h,h')\in\Gamma\times\Gamma: h\neq h'\}.$$
\item For $j=0,1,\ldots,$
\begin{itemize}
\item Initially, set 
$B_{j+1}=\{(\alpha(a)h,\alpha(a)h'):a\in A,  (\alpha(a)h,\alpha(a)h')\in B_0,(h,h')\in B_j\}.$
\item If there exist $(h,h')\in B_{j+1}$ and $g\in H$ such that $(h+g,h'+g)\in B_0,$ add $(h+g,h'+g)$ to $B_{j+1}.$
\item If there exist $(h,h'),(g,g')\in B_{j+1}$ such that $(h+g,h'+g')\in B_0,$ add $(h+g,h'+g')$ to $B_{j+1}.$
\item Repeat the preceding two steps until no new elements can be added to $B_{j+1}.$
\end{itemize}
\end{itemize}

Since there are at most $2^{|H|^2}$ different possibilities for the $B_i,$ the algorithm will eventually cycle, so we terminate the execution as soon as we find some $B_i=B_j$ for $i<j.$  In fact, we will see below that the algorithm requires considerably less time and storage than this crude analysis suggests. We prove the following crucial property of the algorithm:

\medskip\begin{thm}\label{thm.algorithmcorrect} Let $h,h'\in \Gamma$ with $h\neq h',$ and let $k\geq 0.$ Then $(h,h')\in B_k$ if and only if there exist $s,t\in H_A$ with
$(s)^{\alpha_{\Gamma}}\sim_k (t)^{\alpha_{\Gamma}},$ $\alpha(s)=h,$ and $\alpha(t)=h'.$ We also have $B_{k+1}\subseteq B_k$ for all $k\geq 0.$
\end{thm}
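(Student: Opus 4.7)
The plan is to prove the iff characterization by induction on $k$, identifying $B_k$ with the set $R_k$ of distinct pairs in $\Gamma\times\Gamma$ realized by witnesses $s,t$ as in the statement; the containment $B_{k+1}\subseteq B_k$ then follows from $B_k=R_k$ together with the fact that $\sim_{k+1}$ refines $\sim_k$ (Proposition~\ref{prop.freedefinite}), so $R_{k+1}\subseteq R_k$. The base case $k=0$ is immediate: $\sim_0$ is trivial and $\alpha$ is onto, so $R_0$ equals the set of all distinct pairs in $\Gamma\times\Gamma$, matching $B_0$.

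For the soundness direction $B_{k+1}\subseteq R_{k+1}$, each of the three construction rules preserves realizability. The letter-action rule sends witnesses $s,t$ for $(h,h')\in B_k$ to witnesses $as,at$ for $(\alpha(a)h,\alpha(a)h')$, since both $(as)^{\alpha_{\Gamma}}$ and $(at)^{\alpha_{\Gamma}}$ have a single root labeled $(a,\infty)$ (using $\alpha(s),\alpha(t)\in\Gamma$) with $\sim_k$-equivalent children. The two sum rules exploit the identity $(s+u)^{\alpha_{\Gamma}}=s^{\alpha_{\Gamma}}+u^{\alpha_{\Gamma}}$ and the fact that $\sim_{k+1}$ is a congruence under addition.

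For the completeness direction $R_{k+1}\subseteq B_{k+1}$, I start from witnesses $s=\sum_i a_is_i$, $t=\sum_j b_jt_j$ with $s^{\alpha_{\Gamma}}\sim_{k+1}t^{\alpha_{\Gamma}}$, and extract from $T^{k+1}_{s^{\alpha_{\Gamma}}}=T^{k+1}_{t^{\alpha_{\Gamma}}}$ a matching $\sigma$ with $a_i=b_{\sigma(i)}$, $\alpha_{\Gamma}(s_i)=\alpha_{\Gamma}(t_{\sigma(i)})$, and $s_i^{\alpha_{\Gamma}}\sim_k t_{\sigma(i)}^{\alpha_{\Gamma}}$. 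Each matched pair contributes equally (when $\alpha(s_i)=\alpha(t_{\sigma(i)})$) or distinctly with both $\alpha(s_i),\alpha(t_{\sigma(i)})\in\Gamma$; in the distinct case the inductive hypothesis gives $(\alpha(s_i),\alpha(t_{\sigma(i)}))\in B_k$, and the letter-action rule lifts this to $(\alpha(a_is_i),\alpha(a_it_{\sigma(i)}))\in B_{k+1}$. The reachability chain $h\leq\alpha(a_is_i)\leq\alpha(s_i)$ together with $h,\alpha(s_i)\in\Gamma$ pins $\alpha(a_is_i)\in\Gamma$, securing the $B_0$ hypothesis for the letter-action rule.

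The remaining combining step---writing $e$ for the sum of equal contributions and $(h_1,h'_1),\ldots,(h_m,h'_m)\in B_{k+1}$ for the distinct ones, and building $(h,h')=(e+\sum h_i,e+\sum h'_i)\in B_{k+1}$---is the main obstacle, which I plan to handle by a secondary induction on $m$. For $m=1$ a single application of the sum-with-$g$ rule with $g=e$ suffices. For $m\geq 2$, the reduced sums $e+\sum_{i\neq j}h_i$ and $e+\sum_{i\neq j}h'_i$ all lie in $\Gamma$ by the same reachability argument; if all such pairs were diagonal then summing the putative equalities $e+\sum_{i\neq j}h_i=e+\sum_{i\neq j}h'_i$ over all $j\in\{1,\ldots,m\}$ would collapse, by idempotence and commutativity of $+$, to $h=h'$, contradicting $h\neq h'$. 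Hence some index $j$ yields a nondiagonal reduced pair in $B_0$; the secondary inductive hypothesis places it in $B_{k+1}$, and a final application of the two-pair sum rule with $(h_j,h'_j)\in B_{k+1}$ yields $(h,h')\in B_{k+1}$.
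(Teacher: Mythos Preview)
Your proof is correct and follows the same architecture as the paper's: induction on $k$, with the soundness direction ($B_{k+1}\subseteq R_{k+1}$) argued by induction on the construction of $B_{k+1}$, and the completeness direction by matching tree components of $s$ and $t$ and then combining. Two small points are worth tightening: (i) your ``matching $\sigma$'' must be made bijective for the decomposition $h=e+\sum h_i$, $h'=e+\sum h'_i$ to hold simultaneously---the paper does this explicitly by duplicating and reordering components using idempotent-and-commutative equivalence; (ii) the split into ``equal'' versus ``distinct'' contributions should be made on $\alpha(a_is_i)$ versus $\alpha(a_it_{\sigma(i)})$, not on $\alpha(s_i)$ versus $\alpha(t_{\sigma(i)})$, since a pair with $\alpha(s_i)\neq\alpha(t_{\sigma(i)})$ may still contribute equally after applying $a_i$.

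The one genuine tactical difference is in the combining step. The paper runs a linear induction on partial sums $s^{(j)}=a_1s_1+\cdots+a_js_j$, maintaining the invariant that $(\alpha(s^{(j)}),\alpha(t^{(j)}))$ is either diagonal or in $B_{k+1}$; each step uses one of the two sum rules. You instead separate the equal part $e$ up front and run a secondary induction on the number $m$ of distinct pairs, using the idempotence identity $\sum_j(e+\sum_{i\neq j}h_i)=h$ to locate a nondiagonal reduced pair. Both work; the paper's version is a little more direct since it never needs the summing-over-$j$ trick, while yours has the mild advantage of isolating the ``equal'' contributions once and for all.
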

\begin{proof}Let $C_k$ denote the set of pairs $(h,h')$ satisfying the condition in the statement of the theorem.  We will prove by induction on $k$ that $B_k=C_k$ for all $k\geq 0.$ The case $k=0$ is trivial, so assume that $B_k=C_k$ for some $k\geq 0.$  We show $B_{k+1}=C_{k+1}.$

We first prove $B_{k+1}\subseteq C_{k+1}$ by induction on the construction of $B_{k+1}.$  A pair added to $B_{k+1}$ at the initial step of the construction has the form $(\alpha(a)h,\alpha(a)h')\in B_0.$ where $a\in A,$ $(h,h')\in B_k.$  By the induction hypothesis, $(h,h')\in C_k,$ so there is a pair of forests $s,t$ with $(s)^{\alpha_{\Gamma}}\sim_k (t)^{\alpha_{\Gamma}}$ and $\alpha(s)=h,\alpha(t)=h'.$ 
Since $\alpha(s),\alpha(t)\in\Gamma,$ we have
$$(as)^{\alpha_{\Gamma}}=(a,\infty)\cdot (s)^{\alpha_{\Gamma}}, (at)^{\alpha_{\Gamma}}=(a,\infty)\cdot (t)^{\alpha_{\Gamma}},$$
so that $(as)^{\alpha_{\Gamma}}\sim_{k+1}(at)^{\alpha_{\Gamma}}.$  Thus $(\alpha(a)h,\alpha(a)h')\in C_{k+1}.$

Now suppose that a pair $(h,h')$ is added to $B_{k+1}$ after this initial step.  Then we have either 
$h=h_1+g,$ $h'=h_2+g,$ where $(h_1,h_2)\in B_{k+1}$ was added at an earlier step; or $h=h_1+g_1,$
$h'=h_2+g_2,$ where $(h_1,h_2),(g_2,g_2)\in B_{k+1}$ were added at earlier steps.  In the first case, the hypothesis of induction by construction gives $(h_1,h_2)\in C_{k+1},$ so there exists a pair $s,t$ of forests with $(s)^{\alpha_{\Gamma}}\sim_{k+1} (t)^{\alpha_{\Gamma}},$ and $\alpha(s)=h_1,\alpha(t)=h_2.$  Let $u$ be any forest  such that $\alpha(u)=g.$  Then 
$$(s+u)^{\alpha_{\Gamma}}=(s)^{\alpha_{\Gamma}}+(u)^{\alpha_{\Gamma}}\sim_{k+1} (t)^{\alpha_{\Gamma}}+(u)^{\alpha_{\Gamma}}=(t+u)^{\alpha_{\Gamma}},$$
and $\alpha(s+u)=h,$ $\alpha(t+u)=h',$ so $(h,h')\in C_{k+1}.$ We argue the second case similarly, replacing $u$ by a second pair of forests $s',t'$  with $(s')^{\alpha_{\Gamma}}\sim_{k+1} (t')^{\alpha_{\Gamma}}$ mapping to $g_1,g_2.$ This shows $B_{k+1}\subseteq C_{k+1}.$

We now prove the opposite inclusion $C_{k+1}\subseteq B_{k+1}$.  It is sufficient to show that whenever $(s)^{\alpha_{\Gamma}}\sim_{k+1} (t)^{\alpha_{\Gamma}},$ and $\alpha(s),\alpha(t)\in{\Gamma},$  then either $\alpha(s)=\alpha(t),$ or $(\alpha(s),\alpha(t))\in B_{k+1}.$ 
We write
$$s=a_1s_1+\cdots +a_mt_m, t=a'_1t_1+\cdots +a'_nt_n.$$
For each $i=1,\ldots,m$ there is $j=1,\ldots,n$ such that $(a_is_i)^{\alpha_{\Gamma}}\sim_{k+1} (a'_jt_j)^{\alpha_{\Gamma}}$ (which implies in particular that $a_i=a'_j$) and vice-versa.  Thus by duplicating and reordering terms, we can assume that
$$s=a_1s_1+\cdots + a_ms_m,$$
$$t=a_1t_1 +\cdots + a_mt_m,$$
where $(a_js_j)^{\alpha^{\Gamma}}\sim_{k+1} (a_jt_j)^{\alpha^{\Gamma}}$ for each $j.$  We show by induction on $j$ that for each pair of partial sums
$$s^{(j)}=a_1s_1+\cdots + a_js_j,$$
$$t^{(j)}=a_1t_1+\cdots +a_jt_j,$$
we either have $\alpha(s^{(j)})=\alpha(t^{(j)}),$ or $(\alpha(s^{(j)}),\alpha(t^{(j)}))\in B_{k+1}.$  This is true for $j=0,$ since we can take $s^{(0)}=t^{(0)}=0,$ the empty forest. 
 Suppose it holds for some $j\geq 0$ and consider the partial sums $s^{(j+1)}, t^{(j+1)}.$  There are several cases to consider, depending on whether or not $\alpha(s^{(j)})=\alpha(t^{(j)}),$ $\alpha(s^{(j+1)})=\alpha(t^{(j+1)}),$ and $\alpha(a_{j+1}s_{j+1})=\alpha(a_{j+1}t_{j+1}).$
We will treat in detail the case where we have inequality for all three of these pairs; the other cases are proved similarly, and are easier.  Let $h_1=\alpha(s^{(j)}),$ $h_2=\alpha(t^{(j)}).$ 
Since we assume $h_1\neq h_2,$   the induction hypothesis on $j$ gives $(h_1,h_2)\in B_{k+1}.$ Since $\alpha(a_{j+1}s_{j+1})\neq\alpha(a_{j+1}t_{j+1}),$ we must have 
$(a_{j+1}s_{j+1})^{\alpha_{\Gamma}}=(a_{j+1},\infty)(s_{j+1})^{\alpha_{\Gamma}},(a_{j+1}t_{j+1})^{\alpha_{\Gamma}}=(a_{j+1},\infty)(t_{j+1})^{\alpha_{\Gamma}}.$
This is because $\sim_{k+1}$-equivalence implies the root nodes must be equal, and if the common value was $(a_{j+1},h)$ for some $h>{\Gamma},$ then we would get $\alpha(a_{j+1}s_{j+1})=\alpha(a_{j+1})h=\alpha(a_{j+1}t_{j+1}),$ contrary to assumption. Thus $\alpha(s_{j+1}),\alpha(t_{j+1})\in\Gamma.$ By the induction hypothesis (on $k$), $(\alpha(s_{j+1}),\alpha(t_{j+1}))\in B_k,$ so we get $(\alpha(a_{j+1}s_{j+1}),\alpha(a_{j+1}t_{j+1})\in B_{k+1}.$ Finally
$$\alpha(s^{(j+1)})=h_1+\alpha(a_{j+1}s_{j+1}),\alpha(t^{(j+1)})=h_2+\alpha(a_{j+1}t_{j+1}),$$
so $(\alpha(s^{(j+1)}),\alpha(t^{(j+1)}))\in B_{k+1}.$  This completes the proof that $B_k=C_k$ for all $k.$

Since $\sim_{k+1}$ refines $\sim_{k},$ we obtain $C_{k+1}\subseteq C_k,$ and thus $B_{k+1}\subseteq B_k$ for all $k\geq 0.$

\end{proof}

Because $B_{k+1}\subseteq B_k,$  the algorithm will terminate when for some $k,$ either $B_k=\emptyset$ or  $B_{k+1}=B_k\neq\emptyset.$ Theorem~\ref{thm.algorithmcorrect} then tells us that if $B_k=\emptyset$ for all reachability classes $\Gamma,$ then $\alpha$ is nonconfusing.  Otherwise, there is some $\Gamma$ for which $C_k$ is nonempty for every $k,$ and thus $\alpha$ cannot be nonconfusing.
 
 Since  $|B_k|$ strictly decreases at each execution of the outer loop of the algorithm, this loop will not 
be executed more than $|\Gamma|^2<|H|^2$ times.  Computing each set $B_j$ in the inner loop also takes time that is polynomial in $|H|$ and $|A|,$ assuming that we have access to the table of operations in $(H,V)$ and the values of $\alpha(a)$ for $a\in A.$  Finally, given the table of addition in $H$ and the action of letters of $A$ on $H,$ we can construct the graph of the reachability order and compute its strongly connected components in time polynomial in $|H|$ and $|A|,$ so the running time of the entire algorithm is polynomial in $|H|$ and $|A|$. 

We summarize these observations as follows:
\medskip\begin{thm}\label{thm.summary}~
\begin{itemize}
\item[\it (a)] A homomorphism $\alpha:A^{\Delta}\to (H,V)$ is an $(\mathsf{EF},\mathsf{EX})$-homomorphism if and only if it is nonconfusing. Furthermore, $\alpha$ is nonconfusing, then it is nonconfusing with parameter $|H|^2.$ 
\item[\it (b)] We can determine in time polynomial in $(|A|+|H|)$ whether a given $\alpha$ is an $(\mathsf{EF}+\mathsf{EX})$-homomorphism.
\end{itemize}
\end{thm}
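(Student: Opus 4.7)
My plan is to treat Theorem~\ref{thm.summary} as a summary result, assembling it from Theorem~\ref{thm.main} (the characterization itself) and Theorem~\ref{thm.algorithmcorrect} (the correctness of the decision procedure in Section~\ref{section.decidability}). The bulk of the work in both parts is already done; what remains is to extract the quantitative consequences.

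For part~\emph{(a)}, the equivalence between being an $(\mathsf{EF},\mathsf{EX})$-homomorphism and being nonconfusing is precisely Theorem~\ref{thm.main}, so no further argument is needed there. For the parameter bound, fix any reachability class $\Gamma$ of $(H,V)$ and consider the chain $B_0^{\Gamma}\supseteq B_1^{\Gamma}\supseteq B_2^{\Gamma}\supseteq \cdots$ produced by the algorithm; the inclusion comes from the final assertion of Theorem~\ref{thm.algorithmcorrect}. Since $B_0^{\Gamma}\subseteq \Gamma\times\Gamma$ has at most $|\Gamma|^2$ elements, each strict decrease removes at least one pair, so the chain is constant from index $|\Gamma|^2$ onwards. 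If $\alpha$ is nonconfusing, then by definition there is some $m$ with $\equiv_{\alpha,m,\Gamma}$ equal to equality, and Theorem~\ref{thm.algorithmcorrect} translates this to $B_m^{\Gamma}=\emptyset$; monotonicity then forces $B_{|\Gamma|^2}^{\Gamma}=\emptyset$ as well. Taking the maximum over all reachability classes gives $|H|^2$ as a uniform bound, proving the second sentence of part~\emph{(a)}.

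For part~\emph{(b)}, my plan is to bound the running time of the algorithm of Section~\ref{section.decidability} applied separately to each reachability class. First, from the multiplication table of $(H,V)$ and the finitely many values $\alpha(a)$, $a\in A$, one builds the reachability preorder on $H$ and extracts its strongly connected components in time polynomial in $|H|$ and $|A|$. Then, for each reachability class $\Gamma$, the outer loop runs at most $|\Gamma|^2$ times by the monotonicity argument above, and the inner loop—consisting of finitely many closure operations under left multiplication by $\alpha(a)$ and under addition in $H$, each of which can only add pairs drawn from the polynomial-sized set $B_0$—converges in polynomially many steps, each executable in polynomial time given the operation tables. By Theorem~\ref{thm.algorithmcorrect}, $\alpha$ is nonconfusing iff every $B_k^{\Gamma}$ stabilizes at $\emptyset$, a condition that can be tested on the fly.

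The only real subtlety I anticipate is being careful that the parameter $|H|^2$ works uniformly across all reachability classes rather than merely class by class; this is handled by taking the maximum and invoking monotonicity once more. Everything else is bookkeeping on top of Theorems~\ref{thm.main} and~\ref{thm.algorithmcorrect}.
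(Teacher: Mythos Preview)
Your proposal is correct and follows essentially the same route as the paper: the equivalence in part~\emph{(a)} is simply Theorem~\ref{thm.main}, the parameter bound comes from the strictly decreasing chain $B_0\supseteq B_1\supseteq\cdots$ stabilizing within $|\Gamma|^2\leq|H|^2$ steps via Theorem~\ref{thm.algorithmcorrect}, and part~\emph{(b)} is the polynomial-time analysis of the same algorithm. The paper presents these observations in the running text of Section~\ref{section.decidability} immediately before stating the theorem, rather than as a separate proof, but the content is identical to what you have written.
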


\section{Results}\label{sec.con}

Using the wreath product characterizations of $\mathsf{EF}$-algebras, $\mathsf{EX}$-homomorphisms, and $(\mathsf{EF,EX})$-homomorphisms of the previous three sections, we get:
\medskip\begin{thm} Let $A$ be a finite alphabet, and let $L\subseteq H_A.$
\begin{itemize}
\item[\it (a)] $L$ is defined by an $\mathsf{EF}$-formula if and only if $(H_L,V_L)$ is an $\mathsf{EF}$-algebra.
\item[\it (b)] $L$ is defined by an $\mathsf{EX}$-formula if and only if $\mu_L$ is an $\mathsf{EX}$-homomorphism.
\item[\it (c)] $L$  is defined by an $\mathsf{EF+EX}$-formula if and only if $\mu_L$ is an $(\mathsf{EF,EX})$-homomorphism.
\item[\it (d)] There are effective procedures for determining, given a finite tree automaton recognizing $L$, whether $L$ is definable by an  $\mathsf{EF}$-, $\mathsf{EX}$-, or $\mathsf{EF+EX}$-formula, and for producing a defining formula in case one exists.
\end{itemize}
\end{thm}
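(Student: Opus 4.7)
The plan is to prove parts (a)--(c) in parallel, treating each as a two-sided statement, and to obtain (d) as a constructive corollary. The strategy is uniform for (a)--(c): the ``formula $\Rightarrow$ algebra'' direction is an induction on formula complexity that invokes the ``(a)'' part of the corresponding operator/wreath-product proposition, while the ``algebra $\Rightarrow$ formula'' direction uses the wreath-product decomposition theorem from the appropriate section and then peels off one factor at a time using the ``(b)'' part of the same proposition.

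For (a), first suppose $L$ is defined by an $\mathsf{EF}$-formula $\phi$. Induct on $\phi$: the atomic case is trivial, boolean combinations are handled by direct products (which preserve the $\mathsf{EF}$-identities), and the $\mathsf{EF}$ step invokes Proposition~\ref{prop.efoperator}(a) to obtain a recognizer of the form $(H,V)\circ\mathcal{U}_1$, which is an $\mathsf{EF}$-algebra by Theorem~\ref{thm.efwreath}. Since $\mathsf{EF}$-algebras are closed under division, $(H_L,V_L)$ is one. Conversely, given $(H_L,V_L)$ an $\mathsf{EF}$-algebra, Theorem~\ref{thm.efwreath} exhibits $\mu_L$ as factoring through a chain $\mathcal{U}_1\circ\cdots\circ\mathcal{U}_1$; induction on the length of the chain, together with Proposition~\ref{prop.efoperator}(b) applied at each step, presents $L$ as a boolean combination of languages $\mathsf{EF}(a\wedge\psi)$ with $\psi$ defined over a shorter chain. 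Parts (b) and (c) follow the same template: for (b) substitute Theorem~\ref{thm.def} (decomposing $\mathsf{EX}$-homomorphisms as iterated wreath products of $1$-definite maps into $\mathcal{U}_2$) and Proposition~\ref{prop.exoperator}; for (c) use Theorem~\ref{thm.main} to decompose into alternating $\mathcal{U}_1$ and $1$-definite layers and alternate between the two propositions according to the type of outer layer. In both cases one also checks that being an $\mathsf{EX}$- or $(\mathsf{EF},\mathsf{EX})$-homomorphism transfers from an arbitrary recognizer to the syntactic one, which is immediate from closure under factorization (for $\mathsf{EX}$ via the equivalences in Theorem~\ref{thm.def}, and for $(\mathsf{EF},\mathsf{EX})$ via Lemma~\ref{lemma.quotientclosure}).

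For (d), we first compute $(H_L,V_L)$ and $\mu_L$ from a given tree automaton using the standard syntactic congruence. To decide $\mathsf{EF}$-definability, verify the two defining identities in $(H_L,V_L)$ directly. For $\mathsf{EX}$-definability, Theorem~\ref{thm.def}(a)$\Leftrightarrow$(b) reduces to checking whether $\mu_L$ factors through some $\alpha_{A,k}$; by Lemma~\ref{lemma.definite} this amounts to testing the reverse $k$-definite identity on the finite semigroup $\mu_L(V_A^{\mathsf{gu}})$, with $k$ bounded by $|V_L|$. For $(\mathsf{EF},\mathsf{EX})$-definability we invoke the polynomial-time nonconfusion procedure of Theorem~\ref{thm.summary}(b). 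To produce a defining formula when one exists, we run the wreath-product decomposition constructively and recursively apply the ``(b)'' directions of Propositions~\ref{prop.efoperator} and~\ref{prop.exoperator}: each peel writes $L$ as a boolean combination of atoms $\mathsf{EF}(a\wedge\psi)$ or $\mathsf{EX}(a\wedge\psi)$ (and, for $\mathsf{EX}$, of languages $L_\psi$), where $\psi$ is produced from the shorter remaining decomposition by the recursive call.

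The main obstacle is the formula-construction step in (d): the ``(b)'' directions naturally express $L$ in terms of sublanguages over the extended alphabet $A\times H_1\times\cdots\times H_{i-1}$, whereas the final formula must be over $A$. This is handled by observing that the relabelling $s\mapsto s^{\alpha_i}$ is itself definable: a node carrying the secondary label $h$ is detected by a formula $\psi_h$ already produced by the recursion for the previous layer, so the atomic test ``the child is labelled $(a,h)$'' becomes $a\wedge\psi_h$. With this translation in place, the recursion terminates with a pure $\mathsf{EF}+\mathsf{EX}$-formula over the original alphabet $A$; the remainder is careful bookkeeping, since all the semantic ingredients are already supplied by the two propositions.
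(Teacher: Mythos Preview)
Your proposal is correct and follows essentially the same strategy as the paper: induction on formula nesting depth for one direction, and unwinding the wreath-product decomposition via Propositions~\ref{prop.efoperator}(b) and~\ref{prop.exoperator}(b) for the other. Two minor remarks: first, in the converse of (c) you cite Theorem~\ref{thm.main} for the decomposition, but the factorization through $\beta_1\otimes\cdots\otimes\beta_r$ is already the \emph{definition} of an $(\mathsf{EF},\mathsf{EX})$-homomorphism---Theorem~\ref{thm.main} is only needed for decidability in (d). Second, your ``main obstacle'' paragraph is unnecessary: Propositions~\ref{prop.efoperator}(b) and~\ref{prop.exoperator}(b) are already formulated for homomorphisms $A^\Delta\to (H,V)\circ(H',V')$ over the original alphabet $A$, and they output formulas $\mathsf{EF}(a\wedge\psi)$ and $\mathsf{EX}(a\wedge\psi)$ with $a\in A$ and $L_\psi$ recognized by the projection $\pi\beta:A^\Delta\to(H,V)$, so the recursion never leaves $A$ and no alphabet translation is needed.
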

\begin{proof} The first three assertions are proved similarly; we give the proof for the third one, as it is the most general.  First, suppose $L$ is defined by an $\mathsf{EF+EX}$-formula.  We prove by induction on the depth of nesting of the operators in the formula that $\mu_L$ is an $(\mathsf{EF,EX})$-homomorphism.  The base case is when the depth of nesting is 0.  The only forest formulas with nesting depth 0 are ${\bf T}$ and ${\bf F},$ in which case $L$ is either $H_A$ or $\emptyset,$ and the syntactic forest algebra is trivial.  We now suppose that $L$ is defined by a forest formula $\phi,$ with nesting depth $k>0,$ and that the syntactic morphism of every language defined by a formula of smaller depth is an $(\mathsf{EF,EX})$-homomorphism. We can write $\phi$ as a boolean combination of formulas of the form $\mathsf{EF}\tau$ and $\mathsf{EX}\tau,$ where $\tau$ is a tree formula.  It suffices to show that the syntactic morphisms of the languages defined by $\mathsf{EF}\tau$ and $\mathsf{EX}\tau$ are $(\mathsf{EF,EX})$-homomorphisms: this is because the syntactic morphism of the union or intersection of two languages factors through the direct product of the syntactic morphisms of the two languages, which in turn factors through the wreath product.  By the inductive hypothesis, the languages defined by the forest formulas of $\tau$ are recognized by $(\mathsf{EF,EX})$-homomorphisms, so by Propositions~\ref{prop.efoperator} and \ref{prop.exoperator}, so are the languages defined by $\mathsf{EF}\tau$ and $\mathsf{EX}\tau.$  Thus the syntactic morphisms of these languages are $(\mathsf{EF,EX})$-homomorphisms.

Conversely, suppose the syntactic morphism of $L\subseteq H_A$ is an $(\mathsf{EF,EX})$-homomorphism.  Then $L$ is recognized by a wreath product $\alpha_1\otimes\cdots\otimes\alpha_r,$ where each component homomorphism is either 1-definite or maps into ${\cal U}_1.$  It follows from Propositions~\ref{prop.efoperator} and \ref{prop.exoperator} and induction on $r$ that $L$ is defined by an $\mathsf{EF+EX}$-formula.

We turn to the results about effectively determining definability and producing formulas.  We can construct both the syntactic forest algebra and syntactic morphisms for a language $L$ from any automaton recognizing $L.$  For the case of definability by $\mathsf{EF}$ formulas we only need to verify the identities for $\mathsf{EF}$-algebras.  For $\mathsf{EX}$ formulas, we need to test whether $\mu_L(V^{\mathsf{gu}})$ is a reverse-definite semigroup. A semigroup $S$ is reverse-definite if and only if $es=s$ for all $e,s\in S$ with $e$ idempotent, so this too can be determined effectively. Using the characterization of Theorem~\ref{thm.main} our results in Section~\ref{section.decidability} show that we can effectively determine definability by $\mathsf{EF+EX}$-formulas.  While we have not provided a streamlined algorithm for producing the defining formulas themselves, our proofs of wreath product decompositions are entirely constructive, and the formulas themselves can be derived from the construction of these decompositions along with Propositions~\ref{prop.efoperator} and \ref{prop.exoperator}.
\end{proof}

\bibliographystyle{alpha} 
\bibliography{forestalgebras}

\end{document}